  \newcommand{\showoptional}{1}
  \newcommand{\ismain}{0}
  \newcommand{\lecturenotes}{0}
\font\omding=omding
\tikzstyle{env}=[copoint,regular polygon rotate=0,minimum width=0.2cm, fill=black]
\tikzstyle{probs}=[shape=semicircle,fill=white,draw=black,shape border rotate=180,minimum width=1.2cm]
\tikzstyle{every picture}=[baseline=-0.25em,scale=0.5]
\tikzstyle{dotpic}=[] % for backwards-compatibility
\tikzstyle{diredges}=[every to/.style={diredge}]
\tikzstyle{math matrix}=[matrix of math nodes,left delimiter=(,right delimiter=),inner sep=2pt,column sep=1em,row sep=0.5em,nodes={inner sep=0pt},text height=1.5ex, text depth=0.25ex]
\tikzstyle{inline text}=[text height=1.5ex, text depth=0.25ex,yshift=0.5mm]
\tikzstyle{label}=[font=\footnotesize,text height=1.5ex, text depth=0.25ex,yshift=0.5mm]
\tikzstyle{left label}=[label,anchor=east,xshift=1.5mm]
\tikzstyle{right label}=[label,anchor=west,xshift=-1.5mm]
\tikzstyle{braceedge}=[decorate,decoration={brace,amplitude=2mm,raise=-1mm}]
\tikzstyle{small braceedge}=[decorate,decoration={brace,amplitude=1mm,raise=-1mm}]
\tikzstyle{doubled}=[line width=1.6pt] % set the line width for all doubled (quantum) maps/wires
\tikzstyle{boldedge}=[doubled,shorten <=-0.17mm,shorten >=-0.17mm]
\tikzstyle{boldedgegray}=[doubled,gray,shorten <=-0.17mm,shorten >=-0.17mm]
\tikzstyle{singleedgegray}=[gray]%,shorten <=-0.1mm,shorten >=-0.1mm]
\tikzstyle{semidoubled}=[line width=1.4pt] % set the line width for all doubled (quantum) maps/wires
\tikzstyle{semiboldedgegray}=[semidoubled,gray,shorten <=-0.17mm,shorten >=-0.17mm]
\tikzstyle{boxedge}=[semiboldedgegray]
\tikzstyle{boldedgedashed}=[very thick,dashed,shorten <=-0.17mm,shorten >=-0.17mm]
\tikzstyle{vboldedgedashed}=[doubled,dashed,shorten <=-0.17mm,shorten >=-0.17mm]
\tikzstyle{left hook arrow}=[left hook-latex]
\tikzstyle{right hook arrow}=[right hook-latex]
\tikzstyle{sembracket}=[line width=0.5pt,shorten <=-0.07mm,shorten >=-0.07mm]
\tikzstyle{causal edge}=[->,thick,gray]
\tikzstyle{causal nondir}=[thick,gray]
\tikzstyle{timeline}=[thick,gray, dashed]
\tikzstyle{cedge}=[<->,thick,gray!70!white]
\tikzstyle{empty diagram}=[draw=gray!40!white,dashed,shape=rectangle,minimum width=1cm,minimum height=1cm]
\tikzstyle{empty diagram small}=[draw=gray!50!white,dashed,shape=rectangle,minimum width=0.6cm,minimum height=0.5cm]
\tikzstyle{dot}=[inner sep=0mm,minimum width=2mm,minimum height=2mm,draw,shape=circle]  
\tikzstyle{Wsquare}=[white dot, shape=regular polygon, rounded corners=0.8 mm, minimum size=3.3 mm, regular polygon sides=3, outer sep=-0.2mm]
\tikzstyle{Wsquareadj}=[white dot, shape=regular polygon, rounded corners=0.8 mm, minimum size=3.3 mm, regular polygon sides=3, outer sep=-0.2mm, regular polygon rotate=180]
\tikzstyle{ddot}=[inner sep=0mm, doubled, minimum width=2.5mm,minimum height=2.5mm,draw,shape=circle]
\tikzstyle{black dot}=[dot,fill=black]
\tikzstyle{white dot}=[dot,fill=white,,text depth=-0.2mm]
\tikzstyle{white Wsquare}=[Wsquare,fill=gray,,text depth=-0.2mm]
\tikzstyle{white Wsquareadj}=[Wsquareadj,fill=white,,text depth=-0.2mm]
\tikzstyle{green dot}=[white dot] % for backwards-compatibility
\tikzstyle{gray dot}=[dot,fill=gray!40!white,,text depth=-0.2mm]
\tikzstyle{red dot}=[gray dot] % for backwards-compatibility
\tikzstyle{black ddot}=[ddot,fill=black]
\tikzstyle{white ddot}=[ddot,fill=white]
\tikzstyle{gray ddot}=[ddot,fill=gray!40!white]
\tikzstyle{gray edge}=[gray!60!white]
\tikzstyle{small dot}=[inner sep=0.5mm,minimum width=0pt,minimum height=0pt,draw,shape=circle]
\tikzstyle{small black dot}=[small dot,fill=black]
\tikzstyle{small white dot}=[small dot,fill=white]
\tikzstyle{small gray dot}=[small dot,fill=gray!40!white]
\tikzstyle{causal dot}=[inner sep=0.4mm,minimum width=0pt,minimum height=0pt,draw=white,shape=circle,fill=gray!40!white]
\tikzstyle{phase dimensions}=[minimum size=5mm,font=\footnotesize,rectangle,rounded corners=2.5mm,inner sep=0.2mm,outer sep=-2mm]
\tikzstyle{dphase dimensions}=[minimum size=5mm,font=\footnotesize,rectangle,rounded corners=2.5mm,inner sep=0.2mm,outer sep=-2mm]
\tikzstyle{white phase dot}=[dot,fill=white,phase dimensions]
\tikzstyle{white phase ddot}=[ddot,fill=white,dphase dimensions]
\tikzstyle{white rect ddot}=[draw=black,fill=white,doubled,minimum size=5mm,font=\footnotesize,rectangle,rounded corners=2.5mm,inner sep=0.2mm]
\tikzstyle{gray rect ddot}=[draw=black,fill=gray!40!white,doubled,minimum size=6mm,font=\footnotesize,rectangle,rounded corners=3mm]
\tikzstyle{gray phase dot}=[dot,fill=gray!40!white,phase dimensions]
\tikzstyle{gray phase ddot}=[ddot,fill=gray!40!white,dphase dimensions]
\tikzstyle{grey phase dot}=[gray phase dot]
\tikzstyle{grey phase ddot}=[gray phase ddot]
\tikzstyle{small phase dimensions}=[minimum size=4mm,font=\tiny,rectangle,rounded corners=2mm,inner sep=0.2mm,outer sep=-2mm]
\tikzstyle{small dphase dimensions}=[minimum size=4mm,font=\tiny,rectangle,rounded corners=2mm,inner sep=0.2mm,outer sep=-2mm]
\tikzstyle{small gray phase dot}=[dot,fill=gray!40!white,small phase dimensions]
\tikzstyle{small gray phase ddot}=[ddot,fill=gray!40!white,small dphase dimensions]
\tikzstyle{small map}=[draw,shape=rectangle,minimum height=4mm,minimum width=4mm,fill=white]
\tikzstyle{cnot}=[fill=white,shape=circle,inner sep=-1.4pt]
\tikzstyle{asym hadamard}=[fill=white,draw,shape=NEbox,inner sep=0.6mm,font=\footnotesize,minimum height=4mm]
\tikzstyle{asym hadamard conj}=[fill=white,draw,shape=NWbox,inner sep=0.6mm,font=\footnotesize,minimum height=4mm]
\tikzstyle{asym hadamard dag}=[fill=white,draw,shape=SEbox,inner sep=0.6mm,font=\footnotesize,minimum height=4mm]
\tikzstyle{hadamard}=[fill=white,draw,inner sep=0.6mm,font=\footnotesize,minimum height=4mm,minimum width=4mm]
\tikzstyle{small hadamard}=[fill=white,draw,inner sep=0.6mm,minimum height=1.5mm,minimum width=1.5mm]
\tikzstyle{small hadamard rotate}=[small hadamard,rotate=45]
\tikzstyle{dhadamard}=[hadamard,doubled]
\tikzstyle{small dhadamard}=[small hadamard,doubled]
\tikzstyle{small dhadamard rotate}=[small hadamard rotate,doubled]
\tikzstyle{antipode}=[white dot,inner sep=0.3mm,font=\footnotesize]
\tikzstyle{scalar}=[diamond,draw,inner sep=0.5pt,font=\small]
\tikzstyle{dscalar}=[diamond,doubled, draw,inner sep=0.5pt,font=\small]
\tikzstyle{small box}=[rectangle,inline text,fill=white,draw,minimum height=5mm,yshift=-0.5mm,minimum width=5mm,font=\small]
\tikzstyle{small gray box}=[small box,fill=gray!30]
\tikzstyle{medium box}=[rectangle,inline text,fill=white,draw,minimum height=5mm,yshift=-0.5mm,minimum width=10mm,font=\small]
\tikzstyle{square box}=[small box] % for backwards-compatibility
\tikzstyle{medium gray box}=[small box,fill=gray!30]
\tikzstyle{semilarge box}=[rectangle,inline text,fill=white,draw,minimum height=5mm,yshift=-0.5mm,minimum width=12.5mm,font=\small]
\tikzstyle{large box}=[rectangle,inline text,fill=white,draw,minimum height=5mm,yshift=-0.5mm,minimum width=15mm,font=\small]
\tikzstyle{large gray box}=[small box,fill=gray!30]
\tikzstyle{Bayes box}=[rectangle,fill=black,draw, minimum height=3mm, minimum width=3mm]
\tikzstyle{gray square point}=[small box,fill=gray!50]
\tikzstyle{dphase box white}=[dhadamard]
\tikzstyle{dphase box gray}=[dhadamard,fill=gray!50!white]
\tikzstyle{phase box white}=[hadamard]
\tikzstyle{phase box gray}=[hadamard,fill=gray!50!white]
\tikzstyle{point}=[regular polygon,regular polygon sides=3,draw,scale=0.75,inner sep=-0.5pt,minimum width=9mm,fill=white,regular polygon rotate=180]
\tikzstyle{point nosep}=[regular polygon,regular polygon sides=3,draw,scale=0.75,inner sep=-2pt,minimum width=9mm,fill=white,regular polygon rotate=180]
\tikzstyle{copoint}=[regular polygon,regular polygon sides=3,draw,scale=0.75,inner sep=-0.5pt,minimum width=9mm,fill=white]
\tikzstyle{dpoint}=[point,doubled]
\tikzstyle{dcopoint}=[copoint,doubled]
\tikzstyle{pointgrow}=[shape=cornerpoint,kpoint common,scale=0.75,inner sep=3pt]
\tikzstyle{pointgrow dag}=[shape=cornercopoint,kpoint common,scale=0.75,inner sep=3pt]
\tikzstyle{wide copoint}=[fill=white,draw,shape=isosceles triangle,shape border rotate=90,isosceles triangle stretches=true,inner sep=0pt,minimum width=1.5cm,minimum height=6.12mm]
\tikzstyle{wide point}=[fill=white,draw,shape=isosceles triangle,shape border rotate=-90,isosceles triangle stretches=true,inner sep=0pt,minimum width=1.5cm,minimum height=6.12mm,yshift=-0.0mm]
\tikzstyle{wide point plus}=[fill=white,draw,shape=isosceles triangle,shape border rotate=-90,isosceles triangle stretches=true,inner sep=0pt,minimum width=1.74cm,minimum height=7mm,yshift=-0.0mm]
\tikzstyle{wide dpoint}=[fill=white,doubled,draw,shape=isosceles triangle,shape border rotate=-90,isosceles triangle stretches=true,inner sep=0pt,minimum width=1.5cm,minimum height=6.12mm,yshift=-0.0mm]
\tikzstyle{tinypoint}=[regular polygon,regular polygon sides=3,draw,scale=0.55,inner sep=-0.15pt,minimum width=6mm,fill=white,regular polygon rotate=180] 
\tikzstyle{white point}=[point]
\tikzstyle{white dpoint}=[dpoint]
\tikzstyle{green point}=[white point] % for backwards-compatibility
\tikzstyle{white copoint}=[copoint]
\tikzstyle{gray point}=[point,fill=gray!40!white]
\tikzstyle{gray dpoint}=[gray point,doubled]
\tikzstyle{red point}=[gray point] % for backwards-compatibility
\tikzstyle{gray copoint}=[copoint,fill=gray!40!white]
\tikzstyle{gray dcopoint}=[gray copoint,doubled]
\tikzstyle{white point guide}=[regular polygon,regular polygon sides=3,font=\scriptsize,draw,scale=0.65,inner sep=-0.5pt,minimum width=9mm,fill=white,regular polygon rotate=180]
\tikzstyle{black point}=[point,fill=black,font=\color{white}]
\tikzstyle{black copoint}=[copoint,fill=black,font=\color{white}]
\tikzstyle{tiny gray point}=[tinypoint,fill=gray!40!white]
\tikzstyle{diredge}=[->]
\tikzstyle{ddiredge}=[<->]
\tikzstyle{rdiredge}=[<-]
\tikzstyle{thickdiredge}=[->, very thick]
\tikzstyle{pointer edge}=[->,very thick,gray]
\tikzstyle{pointer edge part}=[very thick,gray]
\tikzstyle{dashed edge}=[dashed]
\tikzstyle{thick dashed edge}=[very thick,dashed]
\tikzstyle{thick gray dashed edge}=[thick dashed edge,gray!40]
\tikzstyle{thick map edge}=[very thick,|->]
\newcommand{\boxshape}[3]{%
\pgfdeclareshape{#1}{
\inheritsavedanchors[from=rectangle] % this is nearly a rectangle
\inheritanchorborder[from=rectangle]
\inheritanchor[from=rectangle]{center}
\inheritanchor[from=rectangle]{north}
\inheritanchor[from=rectangle]{south}
\inheritanchor[from=rectangle]{west}
\inheritanchor[from=rectangle]{east}
% ... and possibly more
\backgroundpath{% this is new
% store lower right in xa/ya and upper right in xb/yb
\southwest \pgf@xa=\pgf@x \pgf@ya=\pgf@y
\northeast \pgf@xb=\pgf@x \pgf@yb=\pgf@y

\@tempdima=#2
\@tempdimb=#3

\pgfpathmoveto{\pgfpoint{\pgf@xa - 5pt + \@tempdima}{\pgf@ya}}
\pgfpathlineto{\pgfpoint{\pgf@xa - 5pt - \@tempdima}{\pgf@yb}}
\pgfpathlineto{\pgfpoint{\pgf@xb + 5pt + \@tempdimb}{\pgf@yb}}
\pgfpathlineto{\pgfpoint{\pgf@xb + 5pt - \@tempdimb}{\pgf@ya}}
\pgfpathlineto{\pgfpoint{\pgf@xa - 5pt + \@tempdima}{\pgf@ya}}
\pgfpathclose
}
}}
\tikzstyle{cloud}=[shape=cloud,draw,minimum width=1.5cm,minimum height=1.5cm]
\tikzstyle{map}=[draw,shape=NEbox,inner sep=2pt,minimum height=6mm,fill=white]
\tikzstyle{dashedmap}=[draw,dashed,shape=NEbox,inner sep=2pt,minimum height=6mm,fill=white]
\tikzstyle{mapdag}=[draw,shape=SEbox,inner sep=2pt,minimum height=6mm,fill=white]
\tikzstyle{mapadj}=[draw,shape=SEbox,inner sep=2pt,minimum height=6mm,fill=white]
\tikzstyle{maptrans}=[draw,shape=SWbox,inner sep=2pt,minimum height=6mm,fill=white]
\tikzstyle{mapconj}=[draw,shape=NWbox,inner sep=2pt,minimum height=6mm,fill=white]
\tikzstyle{medium map}=[draw,shape=NEbox,inner sep=2pt,minimum height=6mm,fill=white,minimum width=7mm]
\tikzstyle{medium map dag}=[draw,shape=SEbox,inner sep=2pt,minimum height=6mm,fill=white,minimum width=7mm]
\tikzstyle{medium map adj}=[draw,shape=SEbox,inner sep=2pt,minimum height=6mm,fill=white,minimum width=7mm]
\tikzstyle{medium map trans}=[draw,shape=SWbox,inner sep=2pt,minimum height=6mm,fill=white,minimum width=7mm]
\tikzstyle{medium map conj}=[draw,shape=NWbox,inner sep=2pt,minimum height=6mm,fill=white,minimum width=7mm]
\tikzstyle{semilarge map}=[draw,shape=NEbox,inner sep=2pt,minimum height=6mm,fill=white,minimum width=9.5mm]
\tikzstyle{semilarge map trans}=[draw,shape=SWbox,inner sep=2pt,minimum height=6mm,fill=white,minimum width=9.5mm]
\tikzstyle{semilarge map adj}=[draw,shape=SEbox,inner sep=2pt,minimum height=6mm,fill=white,minimum width=9.5mm]
\tikzstyle{semilarge map dag}=[draw,shape=SEbox,inner sep=2pt,minimum height=6mm,fill=white,minimum width=9.5mm]
\tikzstyle{semilarge map conj}=[draw,shape=NWbox,inner sep=2pt,minimum height=6mm,fill=white,minimum width=9.5mm]
\tikzstyle{large map}=[draw,shape=NEbox,inner sep=2pt,minimum height=6mm,fill=white,minimum width=12mm]
\tikzstyle{large map conj}=[draw,shape=NWbox,inner sep=2pt,minimum height=6mm,fill=white,minimum width=12mm]
\tikzstyle{very large map}=[draw,shape=NEbox,inner sep=2pt,minimum height=6mm,fill=white,minimum width=17mm]
\tikzstyle{medium dmap}=[draw,doubled,shape=NEbox,inner sep=2pt,minimum height=6mm,fill=white,minimum width=7mm]
\tikzstyle{medium dmap dag}=[draw,doubled,shape=SEbox,inner sep=2pt,minimum height=6mm,fill=white,minimum width=7mm]
\tikzstyle{medium dmap adj}=[draw,doubled,shape=SEbox,inner sep=2pt,minimum height=6mm,fill=white,minimum width=7mm]
\tikzstyle{medium dmap trans}=[draw,doubled,shape=SWbox,inner sep=2pt,minimum height=6mm,fill=white,minimum width=7mm]
\tikzstyle{medium dmap conj}=[draw,doubled,shape=NWbox,inner sep=2pt,minimum height=6mm,fill=white,minimum width=7mm]
\tikzstyle{semilarge dmap}=[draw,doubled,shape=NEbox,inner sep=2pt,minimum height=6mm,fill=white,minimum width=9.5mm]
\tikzstyle{semilarge dmap trans}=[draw,doubled,shape=SWbox,inner sep=2pt,minimum height=6mm,fill=white,minimum width=9.5mm]
\tikzstyle{semilarge dmap adj}=[draw,doubled,shape=SEbox,inner sep=2pt,minimum height=6mm,fill=white,minimum width=9.5mm]
\tikzstyle{semilarge dmap dag}=[draw,doubled,shape=SEbox,inner sep=2pt,minimum height=6mm,fill=white,minimum width=9.5mm]
\tikzstyle{semilarge dmap conj}=[draw,doubled,shape=NWbox,inner sep=2pt,minimum height=6mm,fill=white,minimum width=9.5mm]
\tikzstyle{large dmap}=[draw,doubled,shape=NEbox,inner sep=2pt,minimum height=6mm,fill=white,minimum width=12mm]
\tikzstyle{large dmap conj}=[draw,doubled,shape=NWbox,inner sep=2pt,minimum height=6mm,fill=white,minimum width=12mm]
\tikzstyle{large dmap trans}=[draw,doubled,shape=SWbox,inner sep=2pt,minimum height=6mm,fill=white,minimum width=12mm]
\tikzstyle{large dmap adj}=[draw,doubled,shape=SEbox,inner sep=2pt,minimum height=6mm,fill=white,minimum width=12mm]
\tikzstyle{large dmap dag}=[draw,doubled,shape=SEbox,inner sep=2pt,minimum height=6mm,fill=white,minimum width=12mm]
\tikzstyle{very large dmap}=[draw,doubled,shape=NEbox,inner sep=2pt,minimum height=6mm,fill=white,minimum width=19.5mm]
\tikzstyle{muxbox}=[draw,shape=rectangle,minimum height=3mm,minimum width=3mm,fill=white]
\tikzstyle{dmuxbox}=[muxbox,doubled]
\tikzstyle{box}=[draw,shape=rectangle,inner sep=2pt,minimum height=6mm,minimum width=6mm,fill=white]
\tikzstyle{dbox}=[draw,doubled,shape=rectangle,inner sep=2pt,minimum height=6mm,minimum width=6mm,fill=white]
\tikzstyle{dmap}=[draw,doubled,shape=NEbox,inner sep=2pt,minimum height=6mm,fill=white]
\tikzstyle{dmapdag}=[draw,doubled,shape=SEbox,inner sep=2pt,minimum height=6mm,fill=white]
\tikzstyle{dmapadj}=[draw,doubled,shape=SEbox,inner sep=2pt,minimum height=6mm,fill=white]
\tikzstyle{dmaptrans}=[draw,doubled,shape=SWbox,inner sep=2pt,minimum height=6mm,fill=white]
\tikzstyle{dmapconj}=[draw,doubled,shape=NWbox,inner sep=2pt,minimum height=6mm,fill=white]
\tikzstyle{ddmap}=[draw,doubled,dashed,shape=NEbox,inner sep=2pt,minimum height=6mm,fill=white]
\tikzstyle{ddmapdag}=[draw,doubled,dashed,shape=SEbox,inner sep=2pt,minimum height=6mm,fill=white]
\tikzstyle{ddmapadj}=[draw,doubled,dashed,shape=SEbox,inner sep=2pt,minimum height=6mm,fill=white]
\tikzstyle{ddmaptrans}=[draw,doubled,dashed,shape=SWbox,inner sep=2pt,minimum height=6mm,fill=white]
\tikzstyle{ddmapconj}=[draw,doubled,dashed,shape=NWbox,inner sep=2pt,minimum height=6mm,fill=white]
\tikzstyle{smap}=[draw,shape=sNEbox,fill=white]
\tikzstyle{smapdag}=[draw,shape=sSEbox,fill=white]
\tikzstyle{smapadj}=[draw,shape=sSEbox,fill=white]
\tikzstyle{smaptrans}=[draw,shape=sSWbox,fill=white]
\tikzstyle{smapconj}=[draw,shape=sNWbox,fill=white]
\tikzstyle{dsmap}=[draw,dashed,shape=sNEbox,fill=white]
\tikzstyle{dsmapdag}=[draw,dashed,shape=sSEbox,fill=white]
\tikzstyle{dsmaptrans}=[draw,dashed,shape=sSWbox,fill=white]
\tikzstyle{dsmapconj}=[draw,dashed,shape=sNWbox,fill=white]
\tikzstyle{mmap}=[draw,shape=mNEbox]
\tikzstyle{mmapdag}=[draw,shape=mSEbox]
\tikzstyle{mmaptrans}=[draw,shape=mSWbox]
\tikzstyle{mmapconj}=[draw,shape=mNWbox]
\tikzstyle{mmapgray}=[draw,fill=gray!40!white,shape=mNEbox]
\tikzstyle{smapgray}=[draw,fill=gray!40!white,shape=sNEbox]
\pgfmathsetmacro{\pgf@shorten@left}{\pgfkeysvalueof{/tikz/shorten left}}
\pgfmathsetmacro{\pgf@shorten@right}{\pgfkeysvalueof{/tikz/shorten right}}
\pgfmathsetmacro{\pgf@shorten@left}{\pgfkeysvalueof{/tikz/shorten left}}
\pgfmathsetmacro{\pgf@shorten@right}{\pgfkeysvalueof{/tikz/shorten right}}
\tikzstyle{kpoint common}=[draw,fill=white,inner sep=1pt,minimum height=4mm]
\tikzstyle{kpoint sc}=[shape=cornerpoint,kpoint common]
\tikzstyle{kpoint adjoint sc}=[shape=cornercopoint,kpoint common]
\tikzstyle{kpoint}=[shape=cornerpoint,shorten left=5pt,kpoint common]
\tikzstyle{kpoint adjoint}=[shape=cornercopoint,shorten left=5pt,kpoint common]
\tikzstyle{kpoint conjugate}=[shape=cornerpoint,shorten right=5pt,kpoint common]
\tikzstyle{kpoint transpose}=[shape=cornercopoint,shorten right=5pt,kpoint common]
\tikzstyle{kpoint symm}=[shape=cornerpoint,shorten left=5pt,shorten right=5pt,kpoint common]
\tikzstyle{wide kpoint sc}=[shape=cornerpoint,kpoint common, minimum width=1 cm]
\tikzstyle{wide kpointdag sc}=[shape=cornercopoint,kpoint common, minimum width=1 cm]
\tikzstyle{black kpoint}=[shape=cornerpoint,shorten left=5pt,kpoint common,fill=black,font=\color{white}]
\tikzstyle{black kpoint sm}=[shape=cornerpoint,shorten left=5pt,kpoint common,fill=black,font=\color{white},scale=0.75]
\tikzstyle{black kpoint adjoint}=[shape=cornercopoint,shorten left=5pt,kpoint common,fill=black,font=\color{white}]
\tikzstyle{black kpointadj}=[shape=cornercopoint,shorten left=5pt,kpoint common,fill=black,font=\color{white}]
\tikzstyle{black kpointadj sm}=[shape=cornercopoint,shorten left=5pt,kpoint common,fill=black,font=\color{white},scale=0.75]
\tikzstyle{black dkpoint}=[shape=cornerpoint,shorten left=5pt,kpoint common,fill=black, doubled,font=\color{white}]
\tikzstyle{black dkpoint adjoint}=[shape=cornercopoint,shorten left=5pt,kpoint common,fill=black, doubled,font=\color{white}]
\tikzstyle{black dkpointadj}=[shape=cornercopoint,shorten left=5pt,kpoint common,fill=black, doubled,font=\color{white}]
\tikzstyle{black dkpoint sm}=[shape=cornerpoint,shorten left=5pt,kpoint common,fill=black, doubled,font=\color{white},scale=0.75]
\tikzstyle{black dkpointadj sm}=[shape=cornercopoint,shorten left=5pt,kpoint common,fill=black, doubled,font=\color{white},scale=0.75] 
\tikzstyle{kpointdag}=[kpoint adjoint]
\tikzstyle{kpointadj}=[kpoint adjoint]
\tikzstyle{kpointconj}=[kpoint conjugate]
\tikzstyle{kpointtrans}=[kpoint transpose]
\tikzstyle{big kpoint}=[kpoint, minimum width=1.2 cm, minimum height=8mm, inner sep=4pt, text depth=3mm]
\tikzstyle{wide kpoint}=[kpoint, minimum width=1 cm, inner sep=2pt]%, text depth=-0.7 mm]
\tikzstyle{wide kpointdag}=[kpointdag, minimum width=1 cm, inner sep=2pt]%, text depth=0.7 mm]
\tikzstyle{wide kpointconj}=[kpointconj, minimum width=1 cm, inner sep=2pt]%, text depth=-0.7 mm]
\tikzstyle{wide kpointtrans}=[kpointtrans, minimum width=1 cm, inner sep=2pt]%, text depth=0.7 mm]
\tikzstyle{wider kpoint}=[kpoint, minimum width=1.25 cm, inner sep=2pt]%, text depth=-0.7 mm]
\tikzstyle{wider kpointdag}=[kpointdag, minimum width=1.25 cm, inner sep=2pt]%, text depth=0.7 mm]
\tikzstyle{wider kpointconj}=[kpointconj, minimum width=1.25 cm, inner sep=2pt]%, text depth=-0.7 mm]
\tikzstyle{wider kpointtrans}=[kpointtrans, minimum width=1.25 cm, inner sep=2pt]%, text depth=0.7 mm]
\tikzstyle{gray kpoint}=[kpoint,fill=gray!50!white]
\tikzstyle{gray kpointdag}=[kpointdag,fill=gray!50!white]
\tikzstyle{gray kpointadj}=[kpointadj,fill=gray!50!white]
\tikzstyle{gray kpointconj}=[kpointconj,fill=gray!50!white]
\tikzstyle{gray kpointtrans}=[kpointtrans,fill=gray!50!white]
\tikzstyle{gray dkpoint}=[kpoint,fill=gray!50!white,doubled]
\tikzstyle{gray dkpointdag}=[kpointdag,fill=gray!50!white,doubled]
\tikzstyle{gray dkpointadj}=[kpointadj,fill=gray!50!white,doubled]
\tikzstyle{gray dkpointconj}=[kpointconj,fill=gray!50!white,doubled]
\tikzstyle{gray dkpointtrans}=[kpointtrans,fill=gray!50!white,doubled]
\tikzstyle{white label}=[draw,fill=white,rectangle,inner sep=0.7 mm]
\tikzstyle{gray label}=[draw,fill=gray!50!white,rectangle,inner sep=0.7 mm]
\tikzstyle{black label}=[draw,fill=black,rectangle,inner sep=0.7 mm]
\tikzstyle{dkpoint}=[kpoint,doubled]
\tikzstyle{wide dkpoint}=[wide kpoint,doubled]
\tikzstyle{dkpointdag}=[kpoint adjoint,doubled]
\tikzstyle{wide dkpointdag}=[wide kpointdag,doubled]
\tikzstyle{dkcopoint}=[kpoint adjoint,doubled]
\tikzstyle{dkpointadj}=[kpoint adjoint,doubled]
\tikzstyle{dkpointconj}=[kpoint conjugate,doubled]
\tikzstyle{dkpointtrans}=[kpoint transpose,doubled]
\tikzstyle{kscalar}=[kpoint common, shape=EBox, inner xsep=-1pt, inner ysep=3pt,font=\small]
\tikzstyle{kscalarconj}=[kpoint common, shape=WBox, inner xsep=-1pt, inner ysep=3pt,font=\small]
\tikzstyle{spekpoint}=[kpoint sc,minimum height=5mm,inner sep=3pt]
\tikzstyle{spekcopoint}=[kpoint adjoint sc,minimum height=5mm,inner sep=3pt]
\tikzstyle{dspekpoint}=[spekpoint,doubled]
\tikzstyle{dspekcopoint}=[spekcopoint,doubled]
 \tikzstyle{upground}=[circuit ee IEC,thick,ground,rotate=90,scale=2.5]
 \tikzstyle{downground}=[circuit ee IEC,thick,ground,rotate=-90,scale=2.5]
 \tikzstyle{bigground}=[regular polygon,regular polygon sides=3,draw=gray,scale=0.50,inner sep=-0.5pt,minimum width=10mm,fill=gray]
\tikzstyle{arrs}=[-latex,font=\small,auto]
\tikzstyle{arrow plain}=[arrs]
\tikzstyle{arrow dashed}=[dashed,arrs]
\tikzstyle{arrow bold}=[very thick,arrs]
\tikzstyle{arrow hide}=[draw=white!0,-]
\tikzstyle{arrow reverse}=[latex-]
\tikzstyle{cdnode}=[]
\newcommand{\bigunit}[1]{%
\,\begin{tikzpicture}[dotpic,scale=2,yshift=1.5mm]
\node [#1] (a) at (0,-0.25) {}; 
\draw (a)--(0,0.2);
\end{tikzpicture}\,}
\newcommand{\dotcounit}[1]{%
\,\begin{tikzpicture}[dotpic,yshift=-1mm]
\node [#1] (a) at (0,0.35) {}; 
\draw (0,-0.3)--(a);
\end{tikzpicture}\,}
\newcommand{\dotunit}[1]{%
\,\begin{tikzpicture}[dotpic,yshift=1.5mm]
\node [#1] (a) at (0,-0.35) {}; 
\draw (a)--(0,0.3);
\end{tikzpicture}\,}
\newcommand{\dotcomult}[1]{%
\,\begin{tikzpicture}[dotpic,yshift=0.5mm]
	\node [#1] (a) {};
	\draw (-90:0.55)--(a);
	\draw (a) -- (45:0.6);
	\draw (a) -- (135:0.6);
\end{tikzpicture}\,}
\newcommand{\dotmult}[1]{%
\,\begin{tikzpicture}[dotpic]
	\node [#1] (a) {};
	\draw (a) -- (90:0.55);
	\draw (a) (-45:0.6) -- (a);
	\draw (a) (-135:0.6) -- (a);
\end{tikzpicture}\,}
\newcommand{\dotonly}[1]{%
\,\begin{tikzpicture}[dotpic]
\node [#1] (a) at (0,0) {};
\end{tikzpicture}\,}
\newcommand{\smalldotonly}[1]{%
\,\begin{tikzpicture}[dotpic,yshift=-0.15mm]
\node [#1] (a) at (0,0) {};
\end{tikzpicture}\,}
\newcommand{\blackdot}{\dotonly{black dot}\xspace}
\newcommand{\smallblackdot}{\smalldotonly{smalldot}\xspace}%NEW
\newcommand{\whitedot}{\dotonly{white dot}\xspace}
\newcommand{\smallwhitedot}{\smalldotonly{small white dot}\xspace}
\newcommand{\whiteunit}{\dotunit{white dot}\xspace}
\newcommand{\whitecounit}{\dotcounit{white dot}\xspace}
\newcommand{\whitemult}{\dotmult{white dot}\xspace}
\newcommand{\whitecomult}{\dotcomult{white dot}\xspace}
\newcommand{\wdot}{\dotonly{white Wsquare}\xspace}
\newcommand{\smallgraydot}{\smalldotonly{small gray dot}\xspace}
\newcommand{\graymult}{\dotmult{gray dot}\xspace}
\let\olddagger\dagger
\renewcommand{\dagger}{\ensuremath{\olddagger}\xspace}
\theoremstyle{definition}
\newtheorem{theorem}{Theorem}[section]
\newtheorem*{theorem*}{Theorem}
\newtheorem{corollary}[theorem]{Corollary}
\newtheorem{lemma}[theorem]{Lemma}
\newtheorem{proposition}[theorem]{Proposition}
\newtheorem{definition}[theorem]{Definition}
\newtheorem{example}[theorem]{Example}
\newtheorem{example*}[theorem]{Example*}
\newtheorem{examples*}[theorem]{Examples*}
\newtheorem{remark*}[theorem]{Remark*}
\newtheorem{convention}[theorem]{Convention}
\newtheorem{exer}[theorem]{Exercise}
\newtheorem{exeropt}[theorem]{Exercise}
\newtheorem{exer*}[theorem]{Exercise*}
\newtheorem{exer*}[theorem]{Exercise*}
\newtheoremstyle{exercise}{3pt}{3pt}{\color{red}}{}{\bf}{}{.5em}{}
\theoremstyle{exercise}
\newcommand{\TODO}[1]{\marginpar{\scriptsize\bB \textbf{TODO:} #1\e}}
\newcommand{\TODOa}[1]{\marginpar{\scriptsize\bM \textbf{TODO:} #1\e}}
\newcommand{\TODOb}[1]{\marginpar{\scriptsize\bB \textbf{TODO:} #1\e}}
\newcommand{\COMMa}[1]{\marginpar{\scriptsize\bM \textbf{COMM:} #1\e}}
\newcommand{\COMMb}[1]{\marginpar{\scriptsize\bB \textbf{COMM:} #1\e}}
\newcommand{\CHECK}[1]{\marginpar{\scriptsize\bR \textbf{CHECK:} #1\e}}
\def\bR{\begin{color}{red}} 
\def\bB{\begin{color}{blue}}
\def\bM{\begin{color}{magenta}}
\def\bC{\begin{color}{cyan}}
\def\bW{\begin{color}{white}}
\def\bBl{\begin{color}{black}} 
\def\bG{\begin{color}{green}}
\def\bY{\begin{color}{yellow}}
\def\e{\end{color}\xspace}
\newcommand{\bit}{\begin{itemize}}
\newcommand{\eit}{\end{itemize}\par\noindent}
\newcommand{\ben}{\begin{enumerate}}
\newcommand{\een}{\end{enumerate}\par\noindent}
\newcommand{\beq}{\begin{equation}}
\newcommand{\eeq}{\end{equation}\par\noindent}
\newcommand{\beqa}{\begin{eqnarray*}}
\newcommand{\eeqa}{\end{eqnarray*}\par\noindent}
\newcommand{\beqn}{\begin{eqnarray}}
\newcommand{\eeqn}{\end{eqnarray}\par\noindent}
\renewcommand{\TODO}[1]{}
\renewcommand{\TODOa}[1]{}
\renewcommand{\TODOb}[1]{}
\renewcommand{\COMMa}[1]{}
\renewcommand{\COMMb}[1]{}
\renewcommand{\CHECK}[1]{}
\def\bR{\begin{color}{black}} 
\def\bB{\begin{color}{black}}
\def\bM{\begin{color}{black}}
\def\bC{\begin{color}{black}}
\def\bW{\begin{color}{black}}
\def\bG{\begin{color}{black}}
\def\bY{\begin{color}{black}}
\title{Categorical Quantum Mechanics II:\\  Classical-Quantum Interaction}                                                                                                                                                                      
\author[1]{Bob Coecke}
\author[2]{Aleks Kissinger}
\date{}
\affil[1]{Department of Computer Science, Oxford. {\tt coecke@cs.ox.ac.uk}}            
\affil[2]{iCIS, Radboud University, Nijmegen. {\tt aleks@cs.ru.nl}}
\begin{document}       
\maketitle

\begin{abstract}
This is the second part of a three-part overview, in which we derive the category-theoretic backbone of quantum theory from a process ontology, treating quantum theory as a theory of systems, processes and their interactions.  In this part we focus on classical-quantum interaction.  
  
Classical and quantum systems are treated as distinct types, of which the respective behavioural properties are  specified in terms of processes and their compositions.  In particular, classicality is witnessed by `spiders' which fuse together whenever they connect. We define mixedness and show that pure processes are extremal in the space of all processes, and we define entanglement and show that quantum theory indeed exhibits entanglement. We discuss the classification of tripartite qubit entanglement and show that both the GHZ-state and the W-state  come from spider-like families of processes, which differ only in how they behave when they are connected by two or more wires. We define measurements and provide fully-comprehensive descriptions of several quantum protocols involving classical data flow.   Finally, we give a notion of `genuine quantumness', from which special processes called `phase spiders' arise, and get a first glimpse of quantum non-locality.         
\end{abstract}    

%\TODOb{Take TOC away when finished.}  
%\tableofcontents

\section{Introduction}  

This is the second part, the first part being \cite{CKpaperI}, of a three-part   overview  on categorical quantum mechanics (CQM), an area of applied category-theory that over the past twelve years has become increasingly prominent within physics, mathematics and computer science, and even has spin-offs in other areas such as computational linguistics. 

Probably the most appealing feature of CQM is the use of diagrams, which are related to the usual Hilbert space model via symmetric monoidal categories and structures therein.  However, we have written this overview in such a way that no prior knowledge on category theory is required. In fact,  it can be seen as a first encounter with the relevant parts of category theory.   

The focus of this part is the role of classicality in quantum theory.  Our stance is somewhat different than the usual account on classicality, where one typically starts from a classical theory and \em quantizes \em it.  Our starting point is instead the process theory of quantum processes as defined in  \cite{CKpaperI}. We then specify what it means for systems and processes to be classical. The motivation for doing so is the fact that classical states admit a property which quantum systems fail to admit, namely that they can be \em broadcast \em \cite{Nobroadcast, CKpaperI}, or, when restricting to pure states, that they can be \em copied \em \cite{Dieks, WZ}.  Therefore, one can characterise classicality by providing classical systems with a witness for this behaviour.  

Once one has this structure at hand, it is easy to define many incarnations of classicality, namely, \em measurement\em, \em decoherence\em, \em mixing\em, \em disentanglement\em, and \em classical control\em. %in quantum technological applications.  
Also, once we have identified classicality, it will be possible to identify what `genuine quantumness' actually means.  This will lead to the notion of \em phases\em, and already provides us with a glimpse into \em quantum non-locality\em.   

This three-part overview has a big brother, namely, a forthcoming textbook \cite{CKbook}  by the same authors. While this three-part overview amounts to some 100 pages, the book is about 10 times that size, and provides a more comprehensive introduction suitable for students and researchers in a wide variety of disciplines and levels of experience.

\section{What came before}    

In the first part \cite{CKpaperI} we saw how boxes, a.k.a.~\em processes\em, and wires, a.k.a.~\em systems \em or \em types\em, together make up \em diagrams\em:
\ctikzfig{compound-process}
A \em process theory \em  consists of a collection of systems, a collection of processes with inputs and outputs taken from the systems, and a  means of `wiring processes together'.  So in particular, process theories are `closed under wiring processes together'.  Processes with no inputs are called \em states\em, with no outputs are called \em effects\em, and with no inputs or outputs are called \em numbers\em.  It is then natural to interpret the number arising from the composition of a state and an effect as the \textit{probability} that, given the system is in that state, the effect happens: 
\ctikzfig{state_test_paper}
We called this the  \textit{generalised Born rule}.  

\em String diagrams \em are diagrams where we additionally allow inputs to be connected to inputs and outputs  to be connected to outputs:   
\[
\input{./figures/compound-process-capscups.tikz} 
\]
Two very special string diagrams are \em cups \em and \em caps\em:  
\[
\begin{tikzpicture}
	\begin{pgfonlayer}{nodelayer}
		\node [style=none] (0) at (-1, 0.5) {};
		\node [style=none] (1) at (1, 0.5) {}; 
	\end{pgfonlayer}
	\begin{pgfonlayer}{edgelayer}
		\draw [in=-90, out=-90, looseness=1.75] (0.center) to (1.center); 
	\end{pgfonlayer}
\end{tikzpicture}\qquad\qquad\qquad\qquad\qquad\quad\begin{tikzpicture}
	\begin{pgfonlayer}{nodelayer}
		\node [style=none] (0) at (-1, -0.5) {};
		\node [style=none] (1) at (1, -0.5) {};
	\end{pgfonlayer}
	\begin{pgfonlayer}{edgelayer}
		\draw [in=90, out=90, looseness=1.75] (0.center) to (1.center);
	\end{pgfonlayer}
\end{tikzpicture}  
\]
for which we have the following \em yanking \em equation:
\beq\label{eq:yankforccorrcamp-pre}
\input{./figures/classyanking-CQMII-pre.tikz}
\eeq
The string diagram:
\[
D\ := \ \ \input{./figures/circle.tikz}  
\]
is called \em dimension\em.  Though the numbers in our process theory most definitely don't need to be a field, it is convenient to at least assume we have a special number which cancels out the dimension. That is, for every system, there exists a number, suggestively called `$\oneoverD$', such that:
\[ 
\oneoverD\ \input{./figures/circle.tikz}  \ =\ \emptydiag 
\]
where in the RHS we have the \em empty diagram\em.

Using cups and caps, we define the \em transpose \em of a process to be:  
\[
\begin{tikzpicture}
	\begin{pgfonlayer}{nodelayer}
		\node [style=maptrans] (0) at (0, 0) {$f$};
		\node [style=none] (1) at (0, -1.25) {};
		\node [style=none] (2) at (0, 1.25) {};
	\end{pgfonlayer}
	\begin{pgfonlayer}{edgelayer}
		\draw [style=swap] (0) to (1.center); 
		\draw [style=swap] (2.center) to (0);
	\end{pgfonlayer}
\end{tikzpicture}\ \  :=\ \ \input{./figures/transmapyes.tikz}
\]
In addition to rotating processes 180${}^\circ$, they also can be reflected vertically:
\[
\input{./figures/smapAB.tikz} \ \ \mapsto \ \ \input{./figures/smapdagAB.tikz}  
\]
and we refer to the reflected process as the \em adjoint\em. Combining the transpose and the adjoint we obtain the conjugate:
\ctikzfig{smapALLFOURnew}

A process $U$ is an \em isometry \em if we have the first of these two equations:     
\beq\label{eq:unitary}
\input{./figures/unitary.tikz} 
\eeq
and it is \em unitary \em if we have both.

We defined \em quantum types \em to be doubled wires:
\ctikzfig{doubled_wire}
and \em discarding \em as the effect:
\[
\discard\ \, := \ \ \input{./figures/trace_def.tikz}  
\]
Then, \em quantum processes \em are diagrams of the form:
\beq\label{eq:quantummapx_CQMII}
 \dmap{\Phi} \ :=\ \ \input{./figures/quantummapx_CQMII.tikz}
 \eeq
that are moreover  \em causal\em:
\[ %\beq\label{eq:causqmaps}
\begin{tikzpicture}
	\begin{pgfonlayer}{nodelayer}
		\node [style=none] (0) at (0, -1.25) {};
		\node [style=dmap] (1) at (0, 0) {${\Phi}$};
		\node [style=upground] (2) at (0, 1.5) {};
	\end{pgfonlayer}
	\begin{pgfonlayer}{edgelayer}
		\draw [style=boldedge] (1) to (0);
		\draw [style=boldedge] (1) to (2); 
	\end{pgfonlayer}
\end{tikzpicture}\ \ = \ \, \discard
\] %\eeq
\em Pure \em quantum processes are those of the form:    
\[
 \input{./figures/double_process_1_to_1z.tikz}
\]
i.e.~they don't involve discarding.

Among other things, doubling guarantees that all numbers are `positive', i.e.~arising from the product of a number with its conjugate:
\ctikzfig{doubled_state_test}    

\section{Classical systems and processes}  %Or: Classicality? (surely if last section becomes quantumness)

\begin{quote}
\em When spider webs unite, they can tie up a lion.\par \em \hfill    --- Ethiopian proverb.                                                                             
\end{quote}

\noindent
We just saw how doubling systems to form quantum systems lets us define a discarding process, which plays a key role both in the construction of impure quantum processes and the formulation of causality.

We will now see that doubling not only makes room for new processes, but new systems as well, namely the classical ones. Over the next section, we will see how it is useful to formulate classical systems as single wires, hence adopting the paradigm:  
 \[
{\mbox{classical} \over\mbox{quantum}} = {\mbox{single wire} \over\mbox{double wires}}
\]
That is, we obtain two distinct types:
\[  
\left( \ \ \textrm{quantum} \ \ := \ \ \didwire \ \  \right)
\quad \neq \quad
\left( \ \ \textrm{classical} \ \ := \ \ \idwire \ \ \right)
\]

We will justify this convention by showing that one can now define processes which \textit{measure} a quantum system to produce classical data, and conversely \textit{encode} classical data within a quantum system:
\ctikzfig{CQMII-1}
both of which are key to understanding quantum features. We will build these processes using...

\subsection{Spiders}

How can we make our single, classical wires act in a classical way? Unlike quantum systems, which can only be used once, classical data can be copied and deleted. We can witness this fact by introducing special processes on a classical wire for copying and deleting:
\[
\textit{copy} \ :=\ \input{./figures/copy.tikz}
\qquad\qquad\qquad\quad
\textit{delete} \ :=\ \begin{tikzpicture}
	\begin{pgfonlayer}{nodelayer}
		\node [style=white dot] (0) at (0, 0.5) {};
		\node [style=none] (1) at (0, -0.5) {};
	\end{pgfonlayer}
	\begin{pgfonlayer}{edgelayer}
		\draw [style=swap] (0) to (1.center); 
	\end{pgfonlayer}
\end{tikzpicture} 
\]

In fact, these two processes are part of a whole family of `generalised wires':

\begin{definition}\label{def:spider}
A \em spider \em is a diagram of the following form:
\ctikzfig{spidernm}
and the behaviour of a family of spiders (i.e.~one spider for each $n$ and each $m$) is subject to the following rules:
\bit
\item Spider-fusion:
\beq\label{eq:bastards-compose-bis}
\input{./figures/bastards-compose-bis.tikz} 
\eeq
\item Invariance under `leg-swapping':  
\[ %\beq\label{eq:spider2}
\input{./figures/spider-flip.tikz}  \ \  =\ \  \input{./figures/spider-flip2.tikz}
\] %\eeq
\item Invariance under conjugation: 
\[ %\beq\label{eq:spider4}
\input{./figures/spider-hor.tikz} \ \  =\ \  \input{./figures/spider.tikz} 
\] %\eeq
\item A single wire, cups and caps are spiders:
\[ %\begin{equation}\label{eq:two-wire-spiders}
     \begin{tikzpicture}
	\begin{pgfonlayer}{nodelayer}
		\node [style=none] (0) at (0, 1.25) {};
		\node [style=none] (1) at (0, -1.25) {};
		\node [style=white dot] (2) at (0, 0) {};
	\end{pgfonlayer}
	\begin{pgfonlayer}{edgelayer}
		\draw [style=swap] (0.center) to (2);
		\draw [style=swap] (2) to (1.center);
	\end{pgfonlayer}
\end{tikzpicture} \ \ =\ \ 
   \begin{tikzpicture}
	\path [use as bounding box] (-0.25,-1) rectangle (0.25,1);
	\begin{pgfonlayer}{nodelayer}
		\node [style=none] (0) at (0, 1.25) {};
		\node [style=none] (1) at (0, -1.25) {};
	\end{pgfonlayer}
	\begin{pgfonlayer}{edgelayer}
		\draw (1.center) to (0.center);
	\end{pgfonlayer}
\end{tikzpicture}
   \qquad\quad\ \
   \input{./figures/correlate.tikz} \ \ =\ \ \input{./figures/correlatecopy.tikz}  
   \qquad\quad\ \
   \input{./figures/compare.tikz} \ \ =\ \ \input{./figures/comparecopy.tikz}    
\] %\end{equation}
\eit
\end{definition}

Spiders can indeed be seen as `generalised wires'. Whereas normal wires only connect a single input to a single output, spiders connect many inputs to many outputs. Then, spider-fusion above witnesses the fact that, just like connecting two ordinary wires yields again an ordinary wire, connecting two generalised wires yields again a generalised wire. In other words, this sort of `connection' is transitive: if a wire $w$ is connected to $w'$ via a spider and $w'$ is connected to $w''$, then $w$ is connected to $w''$.  A consequence of this  is that any \underline{connected} string diagram consisting only of spiders is equal to a \underline{single} spider: 
\[ %\beq\label{eq:spiderscomp}
\input{./figures/spidercompfuse.tikz}
\] %\eeq
Hence, such a diagram is uniquely determined by its number of inputs and outputs, and we can establish equality simply by counting them.  

% Some of these spiders have dedicated names, for example, by \em copying \em and \em deleting \em we respectively mean these spiders:
% \[
% \tikzfig{copy} \qquad\qquad\qquad\qquad\qquad\qquad\tikzfig{delete}
% \]

If we specialise the spider rules to copying and deleting, we get exactly the behaviours we expect. For instance, copying then deleting one copy does nothing:
\beq\label{eq:copydelete2}
\input{./figures/copydelete2.tikz}
\eeq
copies can be swapped:
\beq\label{eq:copyswapa}
\input{./figures/copyswapa.tikz}   
\eeq
and the following two ways of producing three copies are equal:  
\beq\label{eq:copycopya}
\input{./figures/copycopya.tikz}
\eeq

One particularly interesting instance of spider-fusion is:
\[ %\beq\label{eq:yankforccorrcamp}
\input{./figures/classyanking-CQMII.tikz}
\] %\eeq
which, when replacing these spiders with cups, caps and plain wires, is equation (\ref{eq:yankforccorrcamp-pre}). In other words  the yanking equation for string diagrams arises as an instance of spider-fusion.

With all this talk about copying, it is natural to ask what exactly is being copied here. The answer is: classical values. But rather than giving a separate notion of a classical value, then showing it is copied by a spider, copiability is actually what \underline{defines} classical values:

\begin{definition}
A \textit{classical value} is a state $i$ of a classical system such that:
\beq\label{eq:copyableonly}
\input{./figures/copyableonly.tikz} \ \ =\  \raisebox{-1mm}{\point{i}} \raisebox{-1mm}{\point{i}}  
\eeq
\end{definition}

\noindent Note that our notation for classical values indicates that classical values are self-conjugate, since conjugation plays no role in classical systems.

Next we will see how classical values can travel to and from the quantum world, via...

\subsection{Classical-quantum processes}  

One of the basic ingredients for quantum computing is the ability to encode classical data within a quantum system. We can do this simply by doubling a classical value to obtain the associated quantum state:
\[ 
\pointmap{i} \ \ \mapsto\ \ \dpoint{i}\ :=\ \ \input{./figures/dval.tikz} 
\]

The process which performs this encoding is just the copy-spider, where the two output legs are interpreted as a single quantum system:
\ctikzfig{encode-derive}

Hence, spiders let us construct a bridge between classical and quantum systems. The most important examples are \textit{encoding} and its adjoint \textit{measuring}, which are defined respectively as:
\[ %\beq\label{eq:measure}
\input{./figures/encode.tikz}
\qquad\qquad\qquad  \qquad  \quad
\input{./figures/measure.tikz}
\] %\eeq
In Section~\ref{sec:ClassProc}, we will see that measuring actually sends a quantum state to a probability distribution computed according to the Born rule. But for now, the following should adequately justify the term `measuring':

\begin{proposition}
Measuring undoes encoding:
\[ %\begin{equation}\label{eq:encode-measure}  
  \input{./figures/encodemeasure.tikz}
\] %\end{equation}
\end{proposition}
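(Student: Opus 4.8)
The plan is to reduce the proposition to a single application of spider-fusion. The crucial observation is that both \encode and \meas are spiders in disguise: encoding is the copy-spider, regarded as a process with one classical (single-wire) input leg and two output legs bundled together into one quantum (doubled) wire, while measuring is its adjoint, the spider with two input legs bundled as a quantum wire and one classical output leg. So the whole statement should follow from the fusion axiom of Definition~\ref{def:spider}, together with the axiom that a single wire is itself a spider, without any genuine computation.

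First I would unfold the doubled wire. Since a quantum wire is by definition two single wires side by side, composing \meas after \encode amounts to connecting the two output legs of the copy-spider to the two input legs of its adjoint. Viewed entirely in terms of single classical wires, this is nothing but a connected diagram built from two spiders joined along two wires.

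Next I would invoke spider-fusion. By the fusion rule --- and its stated consequence that any connected diagram of spiders equals a single spider determined solely by its number of inputs and outputs --- the fused diagram depends only on the one classical leg at the bottom and the one classical leg at the top. Hence it is the $(1,1)$-spider, and by the final clause of Definition~\ref{def:spider} (a single wire is a spider) this is exactly the identity wire, which is the claimed equation.

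The one point demanding care is the bookkeeping of the doubling convention: I must confirm that the quantum wire linking encode and measure genuinely corresponds to \emph{two} single wires joining the two spiders, with no conjugation twist and no stray dimension scalar. This is precisely where the earlier remark that classical values are self-conjugate does the work, since doubling the copy-spider then introduces no distinct conjugate strand; and because spider-fusion carries no scalar factor, the leg-counting argument closes the proof at once.
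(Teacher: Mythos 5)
Your proposal is correct and matches the paper's own proof, which likewise unfolds the doubled wire so that the composite becomes the copy-spider joined to its adjoint along two single wires, and then applies spider-fusion together with the axiom that a single wire is a spider to conclude the result is the identity. The extra care you take over the absence of a conjugation twist or scalar factor is consistent with the paper's conventions and does not change the argument.
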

\begin{proof}
Unfolding doubled wires we have:
\[
\input{./figures/measureencode-CQMII.tikz}  
\]
where we used spider-fusion.
\end{proof}

Measuring and encoding now allow us to build processes with both classical and quantum inputs/outputs:
\begin{equation}\label{eq:cq-form} 
   \input{./figures/classquantmap1.tikz}
\end{equation}
By (\ref{eq:quantummapx_CQMII}) we know that any quantum process can be expressed in terms of a pure quantum process and discarding. Hence:
\ctikzfig{c2-purify}
So, \eqref{eq:cq-form} can be explicitly unfolded as follows:
\[
\input{./figures/classquantmap2.tikz} \ \  :=\ \  \input{./figures/double_classquant.tikz}   
\]
Furthermore, we can extend Proposition 4.2 of \cite{CKpaperI} to diagrams which contain not only pure processes and discarding, but also those involving classical-quantum interactions:

\begin{proposition}\label{prop:cq-boxclosedness}
Any string diagram consisting of processes of the form (\ref{eq:cq-form}) is again of that form. Hence they form a process theory.
\end{proposition}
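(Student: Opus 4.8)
The plan is to mimic the proof of the analogous closure result for quantum processes (Proposition~4.2 of \cite{CKpaperI}), reducing every classical wire to the doubled picture and letting \emph{spider-fusion} absorb all of the classical plumbing. Recall that, as unfolded just above, a single process of the form (\ref{eq:cq-form}) is nothing but one pure (doubled) process with discarding on some of its quantum outputs, together with an \emph{encoding} spider on each classical input and a \emph{measuring} spider on each classical output. So the entire classical structure of a cq-process is carried by spiders sitting at its classical boundary, and the task is to show that wiring such things together only ever produces more of the same.

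First I would take an arbitrary string diagram built by wiring together finitely many processes of the form (\ref{eq:cq-form}) and unfold each of them into this canonical shape. Every internal wire of the resulting diagram is then of exactly one of two kinds: a \emph{quantum} (doubled) wire joining two pure boxes or discardings, or a \emph{classical} (single) wire joining a measuring spider of one factor to an encoding spider of another. For the classical wires I would invoke spider-fusion (\ref{eq:bastards-compose-bis}) together with its corollary that any connected network of spiders collapses to a single spider: after fusion, every internal classical wire carries precisely one spider, namely a \emph{decoherence} (a measuring followed by an encoding) on the corresponding quantum wire, while every \emph{external} classical input/output retains a single encoding/measuring spider.

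The key remaining step is to eliminate these internal decoherences. Here I would use that a decoherence is itself a genuine quantum process of the form (\ref{eq:quantummapx_CQMII}): viewed on doubled wires, the spider realising it is the double of a copy map with one of its two output wires discarded, hence equals a pure isometry (copying into an environment) followed by discarding that environment. Replacing each internal decoherence in this way leaves a diagram containing only pure processes, cups, caps and discardings; by Proposition~4.2 of \cite{CKpaperI} its quantum part composes to a \emph{single} pure process with discarding. Finally I would re-fuse the boundary spiders so that each external classical leg carries exactly one encoding or measuring spider, which puts the whole diagram back into the shape of (\ref{eq:cq-form}). Closure under wiring, together with the fact that plain classical and quantum wires are themselves cq-processes, then yields that these processes form a process theory.

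I expect the main obstacle to be precisely the internal classical wires: unlike a quantum connection, gluing a classical output to a classical input does not merely concatenate two pure cores but inserts a decoherence, which is not pure and so cannot be swallowed directly into the pure box. The decisive observation that unblocks the argument is the factorisation of decoherence as \emph{double-of-copy followed by discarding}, after which Proposition~4.2 of \cite{CKpaperI} handles all the remaining bookkeeping. Getting this factorisation right, and tracking carefully which single strands are grouped into which doubled wire throughout the fusion steps, is where the real care is needed.
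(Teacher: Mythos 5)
Your proposal is correct and follows essentially the same route as the paper: both arguments reduce the problem to showing that the composite sitting on each internal classical wire -- a measuring followed by an encoding, i.e.\ decoherence -- is itself of the form (\ref{eq:quantummapx_CQMII}), via the factorisation of decoherence as a doubled copy-spider with one output discarded, after which the closure of quantum processes (Proposition~4.2 of \cite{CKpaperI}) finishes the job. The only cosmetic difference is that the paper first slides all measure/encode spiders to the boundary of a large dashed box rather than fusing spiders wire-by-wire, but the decisive observation is identical.
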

\begin{proof}
Firstly, by re-ordering some inputs/outputs, we can bring all of the measure/encode processes together:  
\[
  \input{./figures/arbitrary_w_me1_copy1.tikz}\quad \mapsto\quad \input{./figures/arbitrary_w_me2_copy.tikz}  
\]
So it remains to be shown that the process in the large dashed box is of the form (\ref{eq:quantummapx_CQMII}).  This amounts to showing all of the boxes it contains are of the form (\ref{eq:quantummapx_CQMII}). The only one possibility  
which doesn't follow immediately by assumption is the process in the small dashed box. In that case, we have:
\[
\input{./figures/c2-decoherenceiscqmap.tikz}
\]
so this process is also of the form (\ref{eq:quantummapx_CQMII}). 
\end{proof}

So how does causality extend to these hybrid boxes?  The conceptual underpinning of causality for quantum processes is that if we apply  a process to some inputs, but then discard all of its outputs, its performance  has gone to waste, and we could as well  have simply discarded its inputs.  This principle evidently extends to hybrid boxes when we also delete its classical outputs, which yields the following generalisation of our definition of a quantum process, now also allowing for classical inputs and outputs:  

\begin{definition}
A \em quantum process \em is a process of the form:
   \ctikzfig{classquantmap1}
which is causal:
\[ %  \beq\label{eq:classquantmapcausality}
  \input{./figures/classquantmapcausality.tikz}
\] %  \eeq
\end{definition}

Note that we still simply call these `quantum processes'. The prior definition now arises as a special case where a process has no classical inputs/outputs. As it should be understood that a quantum process could have zero or more classical and quantum outputs, causality means deleting/discarding \underline{all} of the outputs amounts to deleting/discarding \underline{all} of the inputs.
Thanks to the underlying spider-definitions, these two notions are directly related:

%Thanks to the underlying spider-definitions, deleting and discarding are related in the following way:  

\begin{proposition}
Encoding then discarding is the same as deleting, and measuring then deleting is the same as discarding:
\beq\label{eq:grounddecompa-CQMII}
\input{./figures/grounddecompb-CQMII.tikz}
\qquad\qquad\qquad \quad
\input{./figures/grounddecompa-CQMII.tikz}  
\eeq
\end{proposition}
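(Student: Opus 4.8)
The plan is to derive both equations directly from spider-fusion \eqref{eq:bastards-compose-bis}, once the doubled quantum wire is unfolded into the two single wires underlying it. The guiding observation is that, after this unfolding, every ingredient in the statement — encoding, measuring, discarding and deleting — is itself a spider in the sense of Definition~\ref{def:spider}. Consequently each side of each equation is a \emph{connected} diagram built solely from spiders, so it must collapse to a single spider determined only by its numbers of free inputs and outputs.

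For the first equation I would argue as follows. By the construction of encoding, it is nothing but the copy-spider with one classical (single-wire) input and two output legs, the latter reinterpreted as a single quantum wire. Discarding, by its definition as the cap \discard, closes off those two underlying single wires, and by the final clause of Definition~\ref{def:spider} a cap is a spider. Hence encoding-then-discarding is a connected all-spider diagram in which both output legs of the copy-spider feed into the cap; spider-fusion collapses it to the single spider with one input leg and no output legs, which is exactly deleting.

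The second equation is the mirror image. Measuring is the adjoint of encoding, i.e.\ the merge-spider with two input legs (the unfolded quantum wire) and one classical output; deleting is the spider with a single input and no output. Composing them again yields a connected diagram of spiders, which spider-fusion reduces to the spider with two inputs and no outputs. Reading this two-legged effect back through Definition~\ref{def:spider}, it is precisely the cap on the two underlying single wires — which is the definition of \discard. So measuring then deleting is discarding.

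I expect the only genuine care to lie in the bookkeeping of the doubling convention: one must unfold the quantum wire consistently and verify that, after unfolding, \discard really is the spider cap (and not merely some cap-shaped effect), so that the base cases of Definition~\ref{def:spider} apply verbatim. Once that translation is made explicit, there is no remaining obstacle — both statements are immediate instances of spider-fusion, requiring no case analysis or auxiliary lemmas.
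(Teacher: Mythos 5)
Your proposal is correct and matches the paper's proof: the paper likewise unfolds the doubled quantum wire into its two underlying single wires and then lets spider-fusion collapse the resulting connected all-spider diagram to the spider with the right number of legs (it carries this out explicitly for the measuring/deleting case and notes the other is similar). Your extra care about verifying that \discard unfolds to the spider cap is exactly the bookkeeping the paper's ``unfolding doubled wires'' step silently performs, so there is no gap.
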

\begin{proof}
We will show the measuring/deleting case, as the other equation is proven similar. Unfolding doubled wires we have:
\[
\input{./figures/measuredelete.tikz}
\]
\end{proof}

The equations~\eqref{eq:grounddecompa-CQMII} are exactly the causality equations for encoding and measuring, hence:

\begin{corollary}
  Encoding and measuring are \textit{quantum processes}.
\end{corollary}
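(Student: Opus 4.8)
The plan is to observe that a \emph{quantum process}, in the extended sense just defined, is precisely a process of the form \eqref{eq:cq-form} that satisfies the stated causality equation, and that for both encoding and measuring the two requirements have essentially already been established. So the corollary reduces to reading off these facts, rather than proving anything new.

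First I would confirm that encoding and measuring are each of the form \eqref{eq:cq-form}. Encoding is the copy-spider with its two output legs bundled into a single quantum (doubled) wire, so it has one classical input and one quantum output; measuring is its adjoint, with one quantum input and one classical output. Since spiders are among the basic classical-quantum building blocks, each lies in the process theory of diagrams of the form \eqref{eq:cq-form} by Proposition~\ref{prop:cq-boxclosedness}. Hence the only thing left to verify in either case is causality.

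Next I would match the causality condition to the preceding proposition. For encoding, the single output is quantum, so causality demands that discarding that output equals deleting the classical input; this is exactly the left-hand equation of \eqref{eq:grounddecompa-CQMII}. For measuring, the single output is classical, so causality demands that deleting that output equals discarding the quantum input; this is exactly the right-hand equation of \eqref{eq:grounddecompa-CQMII}. Thus both processes satisfy the relevant causality equation and are therefore quantum processes.

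There is no genuine obstacle here: the substantive content was already delivered by the proposition establishing \eqref{eq:grounddecompa-CQMII}, and the only point requiring care is the bookkeeping of which wire (classical versus quantum) is being deleted and which discarded in each causality equation, so that the two halves of \eqref{eq:grounddecompa-CQMII} line up correctly with the encoding and measuring cases. Once that correspondence is made explicit, the result is immediate.
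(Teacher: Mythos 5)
Your proposal is correct and follows the paper's own (essentially one-line) reasoning: the paper derives the corollary directly from the observation that the two equations in \eqref{eq:grounddecompa-CQMII} are exactly the causality equations for encoding and measuring. Your additional bookkeeping---checking that each process is of the form \eqref{eq:cq-form} and matching the left and right halves of \eqref{eq:grounddecompa-CQMII} to the quantum and classical outputs respectively---is accurate and merely makes explicit what the paper leaves implicit.
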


In Section 4.2 of \cite{CKpaperI} we showed that if the \em reduced process \em (i.e.~what remains if we discard some quantum output) of a process  is pure,  then  it $\otimes$-separates. In order to prove this, we had to assume the following `diagrammatically natural' condition:
\beq\label{EQ:daggeraxiom}
\left( \exists \psi, \phi: \ \ \map{f} \ =\ \kpointketbra{\psi}{\phi} \right)
  \ \ \Longleftrightarrow\ \
 \left( \exists \psi', \phi':  \ \   \begin{tikzpicture}
	\begin{pgfonlayer}{nodelayer}
		\node [style=mapdag] (0) at (0, 1) {$f$};
		\node [style=map] (1) at (0, -1) {$f$};
		\node [style=none] (2) at (0, -2.25) {};
		\node [style=none] (3) at (0, 2.25) {};  
	\end{pgfonlayer}
	\begin{pgfonlayer}{edgelayer}
		\draw (1) to (2.center);
		\draw (3.center) to (0); 
		\draw (0) to (1);
	\end{pgfonlayer}
\end{tikzpicture}\ \ =\ \kpointketbra{\psi'}{\phi'}\right)  
\eeq
Not surprisingly, the same holds when  we delete a classical output:

\begin{proposition}\label{prop:discard-mix-pure-proc-class}  
Assuming (\ref{EQ:daggeraxiom}), if we have:
  \begin{equation}\label{eq:rho-disc-pure-proc-class}
\input{./figures/bekan7class.tikz}\  \ =\ \dmap{\widehat f}  
  \end{equation}
then the quantum process separates as follows:
 \beq\label{eq:bekan8class}
\input{./figures/bekan8class.tikz}\  \ =\ \point{p}\ \dmap{\widehat f}    
  \eeq
\end{proposition}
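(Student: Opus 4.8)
The plan is to reduce the statement to the fully-quantum case already established as Proposition~4.2 of \cite{CKpaperI}, by converting the deleted classical output into a quantum one via encoding and measuring. The guiding idea is the paradigm that a classical wire becomes a quantum one under encoding, together with the fact recorded in \eqref{eq:grounddecompa-CQMII} that \emph{deleting} and \emph{discarding} are interchanged along this bridge.

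First I would precompose the classical output appearing in \eqref{eq:rho-disc-pure-proc-class} with encoding, turning it into a genuinely quantum output and producing a process $g$ whose inputs and outputs are all quantum. By Proposition~\ref{prop:cq-boxclosedness}, together with the Corollary that encoding is itself a quantum process, $g$ is a legitimate quantum process of the form~\eqref{eq:quantummapx_CQMII}, so the Part~I result is applicable to it.

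Second I would compute the reduced process of $g$ obtained by discarding this new quantum output. By the identity \emph{encoding-then-discarding equals deleting} from \eqref{eq:grounddecompa-CQMII}, discarding the encoded output of $g$ coincides with deleting the original classical output; by the hypothesis~\eqref{eq:rho-disc-pure-proc-class} this reduced process is exactly the pure process \dmap{\widehat f}. With the reduced process of $g$ now shown to be pure, and assuming~(\ref{EQ:daggeraxiom}), Proposition~4.2 of \cite{CKpaperI} applies and tells us that $g$ $\otimes$-separates as a pure state $\psi$ on the encoded wire tensored with \dmap{\widehat f}.

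Finally I would recover the original process by measuring the encoded output: since measuring undoes encoding, the original process equals $g$ with its encoded output measured, and pushing this measurement through the tensor factorisation expresses it as the state obtained by measuring $\psi$ tensored with \dmap{\widehat f}. Setting \point{p} to be that classical state — a probability distribution over classical values — yields exactly~\eqref{eq:bekan8class}. The main obstacle is the second step: one must check that the purity hypothesis transfers faithfully across the encode/discard${}={}$delete identity, so that the hypothesis of the Part~I proposition is genuinely met on a purely quantum process. Once this is secured, the separation is delivered directly by Proposition~4.2, and the surrounding manipulations are routine bookkeeping with the spider identities.
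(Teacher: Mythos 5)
Your proposal is correct and follows essentially the same route as the paper: the paper's proof likewise uses \eqref{eq:grounddecompa-CQMII} to rewrite the deleted classical output as an encoded-then-discarded quantum output, applies the Part~I separation result (Proposition~4.4 there, not 4.2 --- you have conflated the section number with the proposition number) to the resulting all-quantum process, and then defines $p$ by measuring the leftover state. The ``obstacle'' you flag in your second step is exactly what \eqref{eq:grounddecompa-CQMII} disposes of, so there is no gap.
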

\begin{proof}
By equation (\ref{eq:grounddecompa-CQMII}), equation (\ref{eq:rho-disc-pure-proc-class}) is equivalent to:  
\[
\input{./figures/bekan7.tikz}\  \ =\ \dmap{\widehat f}  
\]
so by Proposition  4.4 of \cite{CKpaperI} we have:
\[
\input{./figures/bekan8class.tikz}\ \ = \ \ \input{./figures/bekan42withf.tikz}
\]
Setting:
\[
\point{p}\  := \ \ \raisebox{1.5mm}{\input{./figures/bekan42.tikz}}  
\]
we obtain the form (\ref{eq:bekan8class}).
\end{proof}

We didn't specify in the statement of Proposition \ref{prop:discard-mix-pure-proc-class} what $p$ is.  It is a `classical state', which is a special case of...

\subsection{Classical processes}\label{sec:ClassProc}

Classical processes are a special case of quantum processes:

\begin{definition}
A \em classical process \em is a process of the form:  
\begin{equation}\label{eq:classical-map}
\map{f}\ \, :=\ \ \input{./figures/class-map.tikz}
\end{equation}
which is causal:    
\beq\label{eq:classcaus}
\begin{tikzpicture}
	\begin{pgfonlayer}{nodelayer}
		\node [style=map] (0) at (0, 0) {$f$};
		\node [style=white dot] (1) at (0, 1.25) {}; 
		\node [style=none] (2) at (0, -1.25) {};
	\end{pgfonlayer}
	\begin{pgfonlayer}{edgelayer}
		\draw (2.center) to (0);
		\draw (0) to (1);
	\end{pgfonlayer}
\end{tikzpicture}\ \ =\ \  
\eeq
\end{definition}

So we now have three kinds of processes:
\ctikzfig{qc-map-generalise-CQMII}
namely those which are totally classical, totally quantum, or some combination thereof. Classical processes are also called \em stochastic processes \em and classical states: 
\beq\label{eq:measure-rho}
\point{p}\  := \ \ \begin{tikzpicture}
	\begin{pgfonlayer}{nodelayer}
		\node [style=white dot] (0) at (0, 0.5) {};
		\node [style=dkpoint] (1) at (0, -0.75) {$\boldsymbol\rho$};
		\node [style=none] (2) at (0, 1.25) {};
	\end{pgfonlayer}
	\begin{pgfonlayer}{edgelayer}
		\draw [style=swap] (2.center) to (0);
		\draw [style=boldedge] (0) to (1);
	\end{pgfonlayer}
\end{tikzpicture}
\eeq
are also called \em probability distributions\em.  The state:  
  \beq\label{eq:uniform-probability-distribution}
  \oneoverD \ \bigunit{white dot} 
  \eeq
  is called the \textit{uniform probability distribution}, and
  \[
\oneoverD \ \input{./figures/correlate.tikz}  
\]
  is called \textit{perfect classical correlations}. 

We already saw two other examples of classical processes, namely copying and deleting,  in fact, they are examples of a special kind of classical processes:  

\begin{definition}
A classical process is \em deterministic \em if we have:  
\begin{equation}\label{eq:function-map-char}
    \input{./figures/classical_map1.tikz}%\qquad\quad\qquad\tikzfig{classical_map2}   
\end{equation}
\end{definition}

So deterministic classical processes are processes which are such that, applying them, and then copying, is the same as first copying and then applying the process to each of the copies.  

We actually haven't shown yet that copying and deleting are classical processes, and rather than doing that directly, we prove something more general:

\begin{proposition}
Any process satisfying equation (\ref{eq:function-map-char}) as well as the causality equation (\ref{eq:classcaus}), and that is also self-conjugate:
\beq\label{eq:selfconjdeterministic}
\mapconj{f}\ \, = \ \, \map{f}
\eeq
is always a classical process, and hence, a deterministic classical process.
\end{proposition}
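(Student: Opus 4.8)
The plan is to produce $f$ explicitly in the classical-process shape of (\ref{eq:classical-map}): I will show that $f$ equals the process obtained by encoding its classical input, applying the doubled map $\doubleop(f)$ on the resulting quantum system, and then measuring the output. Once $f$ is displayed in this form, the causality hypothesis (\ref{eq:classcaus}) is exactly the extra condition that certifies $f$ as a \emph{classical process}, while the determinism hypothesis (\ref{eq:function-map-char}) is by definition what promotes it to a \emph{deterministic} one; so both halves of the conclusion follow once the form is in hand.

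First I would write down the candidate $\mathrm{measure}\circ\doubleop(f)\circ\mathrm{encode}$ and unfold all doubled wires, using that encoding and measuring are nothing but the copy- and merge-spiders reinterpreted as maps between a single (classical) wire and a doubled (quantum) one. Under this unfolding the doubled map $\doubleop(f)$ becomes $f$ running on one sheet and the conjugate of $f$ on the other. This is the point where self-conjugacy (\ref{eq:selfconjdeterministic}) is essential: it lets me replace that conjugate by $f$, so that the unfolded picture is, reading in order of application, the copy-spider, then $f\otimes f$, then the merge-spider.

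Next I would apply the determinism equation (\ref{eq:function-map-char}), which states precisely that $f\otimes f$ after the copy-spider equals the copy-spider after a single $f$. Substituting this rewrites the diagram as one copy of $f$, then the copy-spider, then the merge-spider. Finally, spider-fusion (\ref{eq:bastards-compose-bis}) merges the adjacent copy- and merge-spiders into a single spider with one input and one output, which by the last clause of Definition~\ref{def:spider} is just the bare identity wire. What is left is exactly $f$, establishing $f = \mathrm{measure}\circ\doubleop(f)\circ\mathrm{encode}$ and hence the desired form.

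The subtle part is the coordination of the three hypotheses rather than any single calculation. Self-conjugacy is what allows the doubling to unfold into two genuine copies of $f$; without it one sheet would carry the conjugate and the determinism rewrite of (\ref{eq:function-map-char}) would not close the diagram back up. I would also take care that collapsing copy-then-merge yields the identity with \emph{no} leftover dimension scalar, which is guaranteed because spider-fusion (\ref{eq:bastards-compose-bis}) outputs a single spider directly. Notably, causality does not enter the derivation of the form at all; it contributes only as the normalization condition (\ref{eq:classcaus}) required by the definition of a classical process, and here it is simply assumed.
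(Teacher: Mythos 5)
Your proposal is correct and follows essentially the same route as the paper: the paper's proof likewise exhibits $f$ in the form (\ref{eq:classical-map}) by unfolding the doubled wires of $\mathrm{measure}\circ\doubleop(f)\circ\mathrm{encode}$, invoking self-conjugacy and the determinism equation (\ref{eq:function-map-char}), and collapsing the resulting copy-then-merge spiders via spider-(un)fusion, with causality entering only as the extra condition in the definition of a classical process. Your explicit bookkeeping of where each of the three hypotheses is used matches the paper's (much terser) diagrammatic calculation.
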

\begin{proof}
By spider-(un)fusion we have:
\ctikzfig{CQMII-deterministicproof}
so we indeed obtain the required form (\ref{eq:classical-map}):    
\[
\map{f}\ \, :=\ \ \input{./figures/class-map-bis.tikz}
\]
\end{proof}

Using spider fusion,  it is easy to see that copying and deleting themselves satisfy equations (\ref{eq:function-map-char}) and that they are causal:
\ctikzfig{copy-causal-CQMII}
As spiders they are also self-conjugate.   

A special case of deterministic classical processes are classical values, which we saw before. Since their input system is trivial (i.e.~`no wire') equation~\eqref{eq:function-map-char} reduces to the copying equation:
\begin{equation}\label{eq:copy-equation}
  \input{./figures/copyableonly.tikz} \ \ =\  \raisebox{-1mm}{\point{i}} \raisebox{-1mm}{\point{i}}
\end{equation}
These are therefore the deterministic states of a classical system, whereas it is useful to think of arbitrary states:
\[
\point{p}
\]
as \textit{probability distributions} over these classical values. Indeed to each such state, we can associated a list of numbers:
\[
\left(\innerprodmap{1}{p}\ ,\ \ldots\ ,\ \innerprodmap{n}{p}\right)  
\]
which give the probability that our system is in each of the respective deterministic states.

When we look at classical states arising from measuring a quantum state as in \eqref{eq:measure-rho}, this list of probabilities is:
\[
\left( \raisebox{-1.1mm}{\begin{tikzpicture}
	\begin{pgfonlayer}{nodelayer}
		\node [style=white dot] (0) at (0, 0.5) {};
		\node [style=dkpoint] (1) at (0, -0.75) {$\boldsymbol\rho$};
		\node [style=copoint] (2) at (0, 1.5) {$1$};
	\end{pgfonlayer}
	\begin{pgfonlayer}{edgelayer}
		\draw [style=swap] (2) to (0);
		\draw [style=boldedge] (0) to (1);
	\end{pgfonlayer}
\end{tikzpicture}}\ ,\ \ldots\ ,\ \raisebox{-1.1mm}{\begin{tikzpicture}
	\begin{pgfonlayer}{nodelayer}
		\node [style=white dot] (0) at (0, 0.5) {};
		\node [style=dkpoint] (1) at (0, -0.75) {$\boldsymbol\rho$};
		\node [style=copoint] (2) at (0, 1.5) {$n$};
	\end{pgfonlayer}
	\begin{pgfonlayer}{edgelayer}
		\draw [style=swap] (2) to (0);
		\draw [style=boldedge] (0) to (1);
	\end{pgfonlayer}
\end{tikzpicture}} \right)
\]
By the copying equation \eqref{eq:copy-equation}, this becomes:
\[
\left(
\kpointbraket[style1=dkpoint,style2=dcopoint,estyle=boldedge]{\bm\rho}{1}
\ ,\ \ldots\ ,\ 
\kpointbraket[style1=dkpoint,style2=dcopoint,estyle=boldedge]{\bm\rho}{n}
\right)
\]
Hence, the resulting classical state is a probability distribution consisting of all of the Born-rule probabilities for $\bm\rho$ being in each of the states:
\[ 
\dpoint{1}\ ,\ \ldots\ , \ \dpoint{n}   
\]
This probability distribution contains some info about $\bm\rho$, but it also throws a lot away. Hence, passing through a classical system always yields a destructive process, known as... % \textit{decoherence}.

\subsection{Decoherence}

This is the quantum process:
\ctikzfig{decoherence1-CQMII}
that appeared in the proof of Proposition \ref{prop:cq-boxclosedness}. While it is a quantum process without classical inputs or outputs, it has a close connection with classicality. More specifically, the middle bit is classical:
\ctikzfig{decoherence1-CQMIIcopy}
so it forces everything that passes through to be classical for a while.  More generally, when we compose decoherence with the inputs and outputs of any quantum process, the `core' of the process becomes classical:   
\ctikzfig{CQMII-2}
Hence, decoherence eliminates any `quantumness' inherent in a quantum process. In Section \ref{sec:quantumness} we will make this statement precise, after we define what `quantumness' actually is.  For now, it suffices to say decoherence acts non-trivially on quantum states, and hence is not equal to the identity. In fact, we can say something  stronger:  

\begin{proposition}\label{prop:decoherence}
Decoherence is not pure.
\end{proposition}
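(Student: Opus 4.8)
The plan is to exploit a structural feature of decoherence that a general pure process need not share, namely \emph{idempotence}, and to play it off against causality. First I would record that decoherence, being measuring followed by encoding, is a quantum process: encoding and measuring are quantum processes, and by Proposition~\ref{prop:cq-boxclosedness} these are closed under wiring, so decoherence is in particular causal. Next, using that measuring undoes encoding, I would observe that applying decoherence twice collapses by spider-fusion to a single decoherence, schematically $\mathsf{dec}\circ\mathsf{dec} = \mathsf{dec}$, since the inner measure-then-encode cancels. Thus decoherence is an idempotent, causal quantum process which, as already noted just before the statement, is \emph{not} the identity. Note that idempotence is the essential ingredient here: a doubled unitary is pure, causal and not the identity, so causality and non-triviality alone cannot rule out purity.

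Then I argue by contradiction. Suppose decoherence were pure, so that $\mathsf{dec} = \doubleop(f)$ for some single-system process $f$. Since $\mathsf{dec}$ is causal and pure, $f$ must be an \emph{isometry}, $f^\dagger \circ f = \mathrm{id}$; this is exactly the content of the causality condition for a doubled process, the isometry equation~\eqref{eq:unitary}. Because doubling respects composition, idempotence becomes $\doubleop(f\circ f) = \doubleop(f)$, and since doubling is faithful up to a global phase this forces $f\circ f = \lambda\, f$ with $|\lambda| = 1$. Composing on the left with $f^\dagger$ and using the isometry equation gives $f = \lambda\,\mathrm{id}$; as global phases disappear under doubling we conclude $\mathsf{dec} = \doubleop(f) = \doubleop(\mathrm{id}) = \mathrm{id}$, contradicting that decoherence is not the identity.

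The genuinely load-bearing steps are the two facts I import about doubling: that a pure causal process is the double of an isometry, and that doubling is faithful up to a phase, so that the idempotence equation descends from $\doubleop(f)$ to $f$. The first I expect to quote from Part~I. For the second, I would justify the descent diagrammatically by bending the output strands of $\doubleop(f\circ f) = \doubleop(f)$ with a cup, recovering $f\circ f$ and $f$ alongside their conjugates and thereby pinning these down up to the relative phase. I expect this descent from the doubled equation to the single-wire equation to be the main obstacle; everything after it is a one-line manipulation, since once $f\circ f \propto f$ and $f$ is an isometry are in hand, $f \propto \mathrm{id}$ is immediate.
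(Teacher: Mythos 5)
Your argument takes a genuinely different route from the paper's. The paper's proof is a direct, self-contained diagram manipulation: it unfolds the assumed purity $\mathsf{dec}=\doubleop(f)$ into single wires and applies spider-fusion to conclude that the \emph{identity wire is disconnected}, which is absurd for a non-trivial system. It uses nothing beyond spider-fusion. Your proof instead trades on idempotence plus two imported facts from Part~I: that a causal pure process is the double of an isometry, and that $\doubleop$ is injective up to a global phase. Granting those, the algebra $f\circ f=\lambda f$, $f^\dagger\circ f=\mathrm{id}$ $\Rightarrow$ $f=\lambda\,\mathrm{id}$ is fine (you should note in passing that the phase-faithfulness lemma requires $f\circ f\neq 0$, which follows since otherwise $\mathsf{dec}=0$ would violate causality). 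What the paper's approach buys is economy and self-containedness; what yours buys is a cleaner conceptual statement -- ``the only pure causal idempotents are identities'' -- which is a reusable fact in its own right.

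There is, however, one concrete issue you should repair: your contradiction terminates in ``$\mathsf{dec}=\mathrm{id}$, contradicting that decoherence is not the identity,'' and you source that premise from the remark just before the proposition. But in the paper that remark is an informal assertion (``for now, it suffices to say\dots''), and the \emph{rigorous} statement that decoherence is not the identity is derived immediately \emph{after} the proposition, as a corollary of non-purity (``since the identity is pure, decoherence cannot be the identity''). As written, your proof therefore makes that corollary circular, and strictly speaking you have only proved the conditional ``if $\mathsf{dec}\neq\mathrm{id}$ then $\mathsf{dec}$ is not pure.'' This is a genuinely weaker statement than the paper's: the paper's contradiction (``the identity is disconnected'') is a more primitive degeneracy than ``$\mathsf{dec}=\mathrm{id}$''. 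To close the gap you would need an independent witness that decoherence moves some state -- e.g.\ a phase state, which in this development only appears in Section~\ref{sec:quantumness} -- or else switch the endgame to the paper's, deriving disconnectedness of the identity directly.
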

\begin{proof}
Suppose decoherence is pure. Then, for some $f$:
\[
\input{./figures/decoh-unfold.tikz}\ \ =\ \ \input{./figures/double_process_1_to_1x.tikz} 
\]
But then, using  spider-fusion, we have:
\ctikzfig{decoherenceqmap2new}
i.e.~the identity is disconnected, so decoherence cannot be pure.     
\end{proof}

Since the identity  is pure, decoherence cannot be the identity.  

\subsection{Quantum and bastard spiders}\label{sec:quantum-and-bastard}

Decoherence is an example of a quantum process that is made up of spiders.  However, simply by doubling  `classical' spider, we obtain a \em quantum spider\em:
\[ 
\input{./figures/zon2.tikz}\ \ := \ \  \input{./figures/qspider2s.tikz} 
\]
Important examples are the \em Bell-state \em and the \em GHZ-state\em:  
\[
\input{./figures/qspidera.tikz} \qquad\qquad\qquad\qquad\input{./figures/qspiderb.tikz}    
\]

Now, neither measuring, encoding, nor decoherence are of the form of a quantum spider (cf.~Proposition \ref{prop:decoherence}). So, what are they? They are examples of the most general kind of spiders that arise when we use measure and encode to connect any classical spider to any quantum spider:   
\ctikzfig{qspiderh}
When unfolding the doubled spider we see that we can fuse all the spiders together into a single spider:
\ctikzfig{CQMII-3}
where some of the legs are quantum and some are classical.  We call these \em bastard spiders\em.  Another example of a bastard spider that we already encountered is discarding:
  \beq\label{eq:discard-is-bastard}
  \input{./figures/discard-is-bastard.tikz}
  \eeq
Bastard spiders have their own dedicated fusion rule:    
  
\begin{proposition}\label{thm:bastards-compose}
Any composition of spiders involving at least one `single-head' spider yields a `single-head' spider, for example:
 \[
\input{./figures/bastards-eat-others-restrict_c.tikz}\qquad\quad\ \input{./figures/bastards-eat-others-restrict_q.tikz}    
\]
\end{proposition}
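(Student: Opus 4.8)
The plan is to reduce the whole statement to the classical spider-fusion rule \eqref{eq:bastards-compose-bis} by unfolding all doubled wires. Recall that a quantum (\emph{double-head}) spider is \emph{defined} as the double of a classical spider, so unfolding its doubled legs turns it into \emph{two} disjoint classical spiders --- the spider together with its conjugate --- with one strand of each doubled leg attached to each of the two copies. A single-head spider, by contrast, unfolds to a \emph{single} classical spider node: this is precisely the content of the bastard-spider derivation preceding \eqref{eq:discard-is-bastard}, where the measure/encode processes on at least one leg fold the two strands together and thereby fuse the spider and its conjugate into one classical node, leaving the remaining doubled legs with both strands attached to that single node and the measured legs appearing as single strands.

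First I would unfold every spider in the composition, together with every measure and encode used to glue the composition along quantum wires, into its underlying classical form. Since measuring and encoding are themselves classical spiders (copy and its adjoint), the whole composite becomes a connected string diagram built purely out of classical spiders. By the corollary following Definition~\ref{def:spider} --- a connected diagram of classical spiders equals a single classical spider --- this diagram collapses to one classical spider node. Refolding the doubled wires then exhibits the result as a single spider whose legs are partly quantum (doubled) and partly classical, i.e.\ a spider of the general bastard form.

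The only genuine issue, and the step I expect to be the main obstacle, is verifying that the refolded spider has a \emph{single} head rather than two; equivalently, that the unfolded classical diagram is connected \emph{across} the two copies (a spider and its conjugate), rather than splitting into two disjoint components (which would instead refold to a double-head quantum spider). This is exactly where the hypothesis that at least one factor is single-head is used: that factor supplies a classical node at which an ``upper'' strand and a ``lower'' (conjugate) strand already meet. Tracing connectivity, each quantum wire joining two spiders is a pair of strands, one feeding the upper copy and one feeding the lower copy of each neighbour; once a single node identifies the two copies, this identification is transported along every such quantum wire, forcing the two copies of each reachable spider to fuse with it. Hence the unfolded diagram is connected as a whole, its single classical node refolds to a single-head spider, and the two displayed instances are simply the cases where a single-head spider is fused to a purely quantum spider along one and along two quantum wires, respectively.
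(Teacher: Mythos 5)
Your proposal is correct and follows essentially the same route as the paper, which states this proposition without a formal proof but justifies it in the immediately preceding discussion by exactly your unfold--fuse--refold argument (``when unfolding the doubled spider we see that we can fuse all the spiders together into a single spider''). Your extra care about connectivity --- that the single-head node joins the upper and lower copies and that this identification propagates along each doubled wire, which is precisely where the ``at least one single-head'' hypothesis is used --- is the right point to make explicit and is consistent with the paper's treatment.
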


\subsection{Category-theoretic counterpart}    

In \cite{CKpaperI} we already encountered dagger-symmetric monoidal categories ($\dagger$-SMCs). Usually, spiders are defined as a (co)algebraic structure within a $\dagger$-SMC: they consist of an algebraic part (a  monoid), a co-algebraic part (a comonoid), and some interaction laws between the two:

\begin{definition}
A \textit{monoid} in an SMC consists of an object $A$ and a pair of morphisms:
\[ 
\whitemult \ :\  A \otimes A \to A \qquad\quad\qquad\qquad\whiteunit \ :\ I \to A 
\]
such that \whitemult is \em associative \em and has \em unit \em \whiteunit:
\[ 
\input{./figures/copycopyb.tikz} \qquad\qquad\quad \input{./figures/copydelete2dag.tikz} 
\]
Similarly, a \textit{comonoid} consists of an object $A$ and morphisms:  
\[ 
\whitecomult \  :\ A \to A \otimes A \qquad\qquad\qquad\quad  \whitecounit \   :\ A \to I 
   \]
such that \whitecomult is \textit{coassociative} and has \textit{counit} \whitecounit:
\[ 
\input{./figures/copycopya.tikz} \qquad\qquad\quad \input{./figures/copydelete2.tikz} 
\]
A \textit{Frobenius algebra} $(A, \whitemult, \whiteunit, \whitecomult, \whitecounit)$ consists of a monoid and a comonoid on the same object $A$, which satisfies the \textit{Frobenius equations}:
\ctikzfig{frobenius}
A \textit{dagger-special commutative Frobenius algebra} ($\dagger$-SCFA) in a $\dagger$-SMC is a Frobenius algebra that additionally satisfies:
  \[ 
  \whitecomult \ =\ \left(\whitemult\right)^\dagger \qquad\qquad    
     \input{./figures/copymatch.tikz} \qquad\qquad
     \input{./figures/copyswapb.tikz} 
     \]
\end{definition}

Spiders now arise from $\dagger$-SCFAs as follows:
\[
\input{./figures/spidernm.tikz} \ \ :=\ \  \input{./figures/S-mn-tree.tikz}
\]

\begin{example}\label{ex:hilb-onb}
In the category FHilb whose objects are finite-dimensional Hilbert spaces and whose morphisms are linear maps, families of spiders, or equivalently, $\dagger$-SCFAs, are in 1-to-1 correspondence with orthonormal bases (ONBs). More specifically,  every family of spiders defines a unique ONB as follows:
\[
\left\{  \raisebox{-1mm}{\point{i}} \  \middle|\ \  \input{./figures/copyableonly.tikz} \ \ =\  \raisebox{-1mm}{\point{i}} \raisebox{-1mm}{\point{i}} \ \right\}
\]
That is, the ONB arises as the states that are copied by the copy-spider.  Conversely, given an ONB, a family of spiders arise as follows:
\begin{equation}\label{eq:spider-concrete}
  \input{./figures/spidernm.tikz}\ \  := \ \ \input{./figures/spiderdefNEW.tikz}
\end{equation}
So, in other words, spiders constitute a diagrammatic axiomatisation of ONBs.    
\end{example}

A diagram which includes (chosen) spiders for each type is called a \textit{spider diagram}:
\ctikzfig{compound-process-dots}
This can be seen as a natural progression from string diagrams. For string diagrams, we relaxed the notion of a diagram to allow inputs to be connected to inputs and outputs to outputs. However, we still required that any input/output must be connected to at most one other input/output. Spider diagrams relax this restriction by allowed any number of inputs and outputs to be connected, via a spider.  As with string diagrams, spider diagrams have a category-theoretic counterpart:

\begin{definition}
  A (dagger) \textit{hypergraph category} is a category where each object $A$ has a chosen (dagger) special commutative Frobenius algebra.
\end{definition}

Hence we can extend the table from Section 3.5 in \cite{CKpaperI} with one more column:

\begin{center}
\begin{tabular}{c|c|c|c|}     
\hline
 & \em string diagram \em & \em \textbf{spider diagram} \em \\
  $\cdots$ & \input{./figures/compound-process-capscups-book.tikz} &
\input{./figures/compound-process-dots.tikz}  \\
& i.e.~1 in/out to 1 in/out &
i.e.~$m$ ins/outs to $n$ ins/outs  \\
\hline
$\cdots$ & $\Rightarrow$ compact closed category & $\Rightarrow$ hypergraph category \\    
\hline
\end{tabular}
\end{center}

As we saw in Example~\ref{ex:hilb-onb}, in the category of finite-dimensional Hilbert spaces, fixing a $\dagger$-SCFA for each object is the same as choosing an ONB for each Hilbert space. The resulting category is equivalent to the following category:

\begin{definition}
  $\textrm{Mat}(\mathbb C)$ is the SMC whose objects are natural numbers, and whose morphisms $M : m \to n$ are $n \times m$ matrices, where:
  \begin{itemize}
    \item $\circ$ is multiplication of matrices
    \item $m \otimes n = mn$ on objects, and $M \otimes N$ is the Kronecker product of matrices.
  \end{itemize}
\end{definition}

Every object $n$ in $\textrm{Mat}(\mathbb C)$ has a canonical choice of ONB given by the $n \times 1$ matrices:
\[ \point{1} \ :=\  \left(\begin{matrix}
  1 \\ 0 \\ \vdots \\ 0
\end{matrix}\right)\qquad
\point{2} \ :=\  \left(\begin{matrix}
  0 \\ 1 \\ \vdots \\ 0
\end{matrix}\right)\qquad
\cdots\qquad
\point{n} \ :=\  \left(\begin{matrix}  
  0 \\ 0 \\ \vdots \\ 1
\end{matrix}\right) \]
with which one can define spiders as in equation~\eqref{eq:spider-concrete}. Letting adjoints be the conjugate-transpose of matrices, we have:

\begin{theorem}
  $\textrm{Mat}(\mathbb C)$ is a dagger hypergraph category.
\end{theorem}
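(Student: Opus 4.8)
The plan is to produce, for each object $n$, a $\dagger$-SCFA built from the canonical ONB of Example~\ref{ex:hilb-onb}, and to verify the axioms by a short computation on basis vectors. There are two things to establish: that $\textrm{Mat}(\mathbb C)$ is a $\dagger$-SMC, and that each object carries a chosen special commutative Frobenius algebra compatible with that dagger.

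First I would fix the dagger. The statement already grants that $\textrm{Mat}(\mathbb C)$ is an SMC, so it remains only to check that $M \mapsto M^\dagger$ (conjugate-transpose) is a genuine dagger functor: it is the identity on objects, involutive, contravariant with respect to $\circ$, and compatible with the monoidal structure, i.e.\ $(M \otimes N)^\dagger = M^\dagger \otimes N^\dagger$ for the Kronecker product, with the symmetry, associators and unitors all unitary. These are standard properties of conjugate-transposition and the Kronecker product, so this step is pure bookkeeping.

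Next, for each object $n$ I would define the candidate Frobenius algebra on the canonical basis $\ket{1},\ldots,\ket{n}$. Take the comultiplication $n \to n \otimes n$ to be the copy map $\ket{i} \mapsto \ket{i} \otimes \ket{i}$ and the counit $n \to 1$ to be the deleting row of ones $\ket{i} \mapsto 1$, and define the multiplication and unit as their adjoints. This forces the dagger relation $\whitecomult = \left(\whitemult\right)^\dagger$ by construction and yields the comparison map $\whitemult : \ket{i}\otimes\ket{j} \mapsto \delta_{ij}\,\ket{i}$ together with $\whiteunit : 1 \mapsto \sum_i \ket{i}$ --- precisely the structure produced by equation~\eqref{eq:spider-concrete} applied to the canonical ONB.

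It then remains to verify the Frobenius algebra axioms, and the quickest route is to invoke Example~\ref{ex:hilb-onb}. Identifying $\textrm{Mat}(\mathbb C)$ with the full $\dagger$-SMC-subcategory of \catFHilb on the spaces $\mathbb C^n$ (sending the object $n$ to $\mathbb C^n$ and reading each matrix as its own linear map in the standard basis), the canonical basis vectors form an ONB, and every ONB in \catFHilb determines a $\dagger$-SCFA; all axioms therefore hold automatically. Alternatively one checks them directly on basis states: speciality $\whitemult \circ \whitecomult = \mathrm{id}$ is immediate from $\ket{i} \mapsto \ket{i}\otimes\ket{i} \mapsto \ket{i}$, commutativity holds because $\whitemult$ is symmetric in its two inputs, and both Frobenius equations reduce to the common value $\ket{i}\otimes\ket{j} \mapsto \delta_{ij}\,\ket{i}\otimes\ket{i}$. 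I expect no genuine obstacle, since the substantive content is already packaged in the correspondence of Example~\ref{ex:hilb-onb}; the only subtlety worth flagging is keeping the convention $n \otimes n = n^2$ consistent throughout --- ensuring the copy matrix really lands in the $n^2$-dimensional space under the Kronecker convention, and that the swap used in commutativity is the Kronecker swap --- but this is a coherence check rather than a real difficulty.
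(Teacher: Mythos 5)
Your proposal is correct and matches the paper's (implicit) argument exactly: the paper states the theorem immediately after exhibiting the canonical ONB on each object $n$ and noting that spiders are defined from it via equation~\eqref{eq:spider-concrete} with the dagger given by conjugate-transpose, which is precisely the construction you carry out and verify. The only difference is that you fill in the routine checks the paper leaves unstated.
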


Furthermore, as we saw with string diagrams and FHilb, any equation between spider diagrams in $\textrm{Mat}(\mathbb C)$ involving matrices $M, N, \dots$ holds \textit{generically}--i.e.~for \underline{all} matrices $M, N, \ldots$--precisely when the diagrams themselves are equal. In other words:

\begin{theorem}\label{thm:spider-completeness}
  $\textrm{Mat}(\mathbb C)$ is complete for spider diagrams.
\end{theorem}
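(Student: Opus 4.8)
The plan is to split the statement into its two halves. \emph{Soundness} --- that provably equal spider diagrams have equal matrix interpretations under every assignment of matrices to the generator boxes --- is immediate from the preceding theorem that $\textrm{Mat}(\mathbb C)$ is a dagger hypergraph category: every diagrammatic move we are permitted to use (spider-fusion~(\ref{eq:bastards-compose-bis}), leg-swapping, the Frobenius and speciality laws, and the ambient symmetric-monoidal moves) is one of the defining equations of that structure, hence survives interpretation. The content of the theorem is therefore the converse, \emph{completeness}: if $\llbracket D_1 \rrbracket = \llbracket D_2 \rrbracket$ as matrices for \underline{every} choice of complex matrices for the generators occurring in $D_1,D_2$, then $D_1$ and $D_2$ are equal as spider diagrams. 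Equivalently, the generic interpretation functor is faithful.

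First I would put each diagram into a normal form. Using spider-fusion together with the fact that a plain wire, cups and caps are themselves spiders, every maximal connected sub-network of spiders collapses to a single spider; what remains is a canonical \emph{labelled hypergraph} whose vertices are the generator boxes (and the boundary inputs/outputs) and whose hyperedges are the surviving spiders, each hyperedge recording which box-ports and boundary-ports it ties together. Since any connected spider-only diagram is a single spider determined by its leg-count, and since leg-swapping invariance makes each hyperedge a genuine set of incident ports, this normal form is unique up to relabelling; so two spider diagrams are provably equal exactly when their labelled hypergraphs are isomorphic (respecting box-labels, port structure, and the boundary).

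Then I would establish faithfulness by evaluating the normal form concretely in the canonical ONB. In $\textrm{Mat}(\mathbb C)$ a spider is the tensor $\sum_i \ket{i}\cdots\ket{i}\bra{i}\cdots\bra{i}$, the totally-correlated ``generalised Kronecker delta'' that forces all of its legs to carry a common index $i$ and sums over $i$. Consequently, once basis indices are fixed on the boundary wires, $\llbracket D \rrbracket$ is a sum, over all assignments of an index to each hyperedge, of the product of the corresponding entries of the generator matrices. Treating those entries as independent commuting indeterminates turns $\llbracket D \rrbracket$ into a polynomial, and each monomial records exactly one entry per box-occurrence together with the index pattern imposed by the incident hyperedges. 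The combinatorial claim is that this polynomial (indeed the whole tensor of polynomials over all boundary indices) reconstructs the labelled hypergraph up to isomorphism: the monomials recover the boxes and their multiplicities, and the sharing of indices across a monomial recovers which ports are joined by a common hyperedge. Because generic matrices realise all these coefficients independently, $\llbracket D_1\rrbracket = \llbracket D_2\rrbracket$ for all assignments forces the two polynomials, hence the two hypergraphs, to coincide, giving $D_1 = D_2$.

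The main obstacle is this last faithfulness step, not the normal-form reduction, which follows directly from spider-fusion. Care is needed to argue that the tensor of polynomials genuinely reconstructs the hypergraph: one must handle generators that occur several times (so a monomial may repeat variables), the symmetry of each spider under permuting its legs, hyperedges all of whose legs are internal --- closed loops that contribute a scalar factor of the dimension $D$ and must be tracked separately --- and the book-keeping of boundary indices, so that the full \emph{tensor} rather than a single matrix entry is being compared. Once one verifies that distinct normal forms yield distinct tensors of polynomials, i.e.~that the interpretation is injective on isomorphism classes of labelled hypergraphs, completeness follows; this dovetails with the analogous string-diagram completeness for $\textrm{Mat}(\mathbb C)$ invoked earlier, spider diagrams merely enlarging string diagrams by permitting hyperedges in place of ordinary wires.
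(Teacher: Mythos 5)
The paper does not actually prove Theorem~\ref{thm:spider-completeness}: it states the result and defers entirely to the literature (``Completeness for spider diagrams was given in \cite{KissingerCompleteness}''), so there is no in-paper argument to compare yours against. That said, your two-part strategy --- soundness from the fact that $\textrm{Mat}(\mathbb C)$ is a dagger hypergraph category, then completeness via (i) a spider-fusion normal form identifying a diagram with a labelled hypergraph and (ii) faithfulness of the generic interpretation on such hypergraphs --- is exactly the standard route taken in the cited work, and you have correctly isolated step (ii) as the genuinely hard part. Your reduction of (ii) to ``the tensor of polynomials in the generators' entries reconstructs the hypergraph'' is the right idea: one reads off the hyperedges from the index-sharing pattern of a monomial in which every hyperedge receives a \emph{distinct} index value.

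The one substantive gap is precisely there: that reconstruction argument needs each wire type to be interpreted by a dimension at least as large as the number of hyperedges carried by that type, since otherwise no index assignment separates all hyperedges and monomials from non-isomorphic hypergraphs can coincide (and, degenerately, at dimension $1$ all spiders collapse and completeness fails outright). So the quantification in the theorem has to be read as ranging over dimensions as well as matrices --- i.e.\ over all interpretations of the generating types and boxes --- or one must fix a sufficiently large dimension in advance; as stated, with the objects of $\textrm{Mat}(\mathbb C)$ being fixed natural numbers, your final step does not go through without this proviso. You should also make explicit that repeated occurrences of the same generator contribute entries of the \emph{same} indeterminate matrix, so the monomial only determines the incidence structure up to permuting those occurrences --- which is exactly an isomorphism of labelled hypergraphs, hence harmless, but it deserves a sentence. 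With those two points patched, your sketch is a faithful outline of the proof the paper is citing.
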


Now, we introduced quantum and classical systems as two different kinds of things: doubled wires and single wires (which come with a family of spiders).  However, there is a manner in which classical systems and quantum systems, as well as hybrids thereof, can be treated on the same footing, via certain non-commutative Frobenius algebras:

\begin{definition}\label{def:dagger-SSFA}
  A dagger-special \textit{symmetric} Frobenius algebra ($\dagger$-SSFA) in a $\dagger$-SMC is a Frobenius algebra that additionally satisfies:
  \[ 
  \whitecomult \ =\ \left(\whitemult\right)^\dagger \qquad\qquad
     \input{./figures/copymatch.tikz} \qquad\qquad
     \input{./figures/symmetric.tikz} 
     \]
\end{definition}

While it doesn't look much different, the third, \textit{symmetry equation} above is a vast generalisation of commutativity. If we now look at morphisms which factor as:
\begin{equation}\label{eq:cpstar-form}
  \input{./figures/classquantmap1class-paper.tikz} \ \ := \ \  \input{./figures/double_classquantclass-paper.tikz}
\end{equation}
(which is the diagram we used to define classical processes)  and generalise to the measure/encode processes  to those coming from a $\dagger$-SSFA, this  splits into multiple cases. If the measure/encode processes come from a  commutative $\dagger$-SSFA (i.e.~a $\dagger$-SCFA), this is just a classical process as before. But if they come from a non-commutative $\dagger$-SSFA, these represent non-classical processes. For example, taking the $\dagger$-SSFAs to be the following `pants algebra':
\ctikzfig{pants-alg}
then the form given in \eqref{eq:cpstar-form} is equivalent to the condition for being of the form (\ref{eq:quantummapx_CQMII}). 
%a quantum map. 
Furthermore, since $\otimes$-compositions of $\dagger$-SSFAs are again $\dagger$-SSFAs, \eqref{eq:cpstar-form} suffices to capture classical systems, quantum systems, and $\otimes$-compositions thereof. This fact then motivates the following definition, which relies on the dagger-compact closed categories of Definition 3.13 in \cite{CKpaperI}:

\begin{definition}
  For any dagger-compact closed category $\mathcal C$, $\textbf{CP*}[\mathcal C]$ has as objects $\dagger$-SSFAs and as morphisms:
  \[
  f : (A, \whitemult, \ldots) \to (B, \graymult, \ldots)  
  \]
  morphisms $f : A \to B$ from $\mathcal C$, which additionally satisfy \eqref{eq:cpstar-form}.
\end{definition}

\subsection{Reference and further reading}  

The diagrammatic representation of classical data was initiated in \cite{CPav}, in which spider-less but fully diagrammatic descriptions of several protocols were given. The passage to spiders took place in \cite{CPaq}. Much earlier, in \cite{DaviesLewis}, the classical data produced by quantum measurements was also represented in terms of an ONB.  However, no distinction was made between the spaces in which the classical data was represented and in which the quantum systems were described. 

The two vs.~one wire paradigm and an early form of the resulting quantum processes were introduced in \cite{CPaqPav}. Bastard spiders are a refinement of the notation used in \cite{CDKZ, CDKZ2} to represent ONB-measurements. The fact that spiders characterise ONBs is from \cite{CPV}.     

The notion of a Frobenius algebra first appeared in \cite{BrauerNesbitt} but was first presented in its modern, categorical form in \cite{CarboniWalters}. The fact that (special) commutative Frobenius algebras have canonical forms (i.e.~spiders) that `fuse' together comes from a `folk theorem' relating Frobenius algebras to geometrical objects called \textit{cobordisms}. A standard reference is \cite{KockBook}. An explicit proof of the `spider' form for special commutative Frobenius algebras was given using distributive laws in \cite{Lack}. Completeness for spider diagrams was given in \cite{KissingerCompleteness}  and the CP*-construction was introduced in \cite{CPstar}.

\section{Classical-quantum interaction}    

\begin{quote}  
\em Damn it! I knew she was a monster! John! Amy! Listen! Guard your *********.\par \em \hfill    --- David Wong, This Book Is Full of Spiders, 2012.    
\end{quote}            

\noindent  
Classical systems play several roles in quantum theory.  We already saw that the spiders which `witness' classicality allow us to define the measurement and encoding processes which form the interfaces between classical and quantum systems, and how decoherence is a quantum process that invokes classical behaviour.  We'll now discuss some more roles of classicality.

\subsection{Measurement}

Earlier we already encountered one kind of quantum measurement:
\ctikzfig{measure} 
More general measurements arise as follows:
\ctikzfig{dem-ONB}
  where $\widehat U$ is a unitary quantum process.   All of these measurements are called \em demolition \em measurements, since after the measurement the quantum system is gone.  An example of a \em non-demolition \em measurement is this bastard spider:
 \beq\label{eq:CQMII-7} 
 \input{./figures/CQMII-7.tikz}
  \eeq
 or, more generally:
 \beq\label{eq:non-dem-map} 
  \input{./figures/non-dem-map.tikz}    
  \eeq
 If we discard the quantum output of a non-demolition measurement, we obtain the demolition measurement we had before:  
\ctikzfig{nondem-disc-quantum}
On the other hand, if we discard the classical output, we get decoherence:    
\[ %\begin{equation}\label{eq:nondem-decoh}
  \input{./figures/CQMII-8.tikz}
\] %\end{equation}
So performing a non-demolition measurement, and then forgetting about the outcome yields a non-trivial operation.  

What does such a measurement do to a quantum state?  Just like we used deterministic classical states to pick out probabilities from a probability distribution at the end of Section \ref{sec:ClassProc}, we can use them to see how a measurement alters the state, by plugging them in the measurement-outcome wire:
\ctikzfig{CQMII-9}
Hence,  the state undergoes the following radical transition:
\[
\dkpoint{\boldsymbol\rho}\ \ \mapsto\ \ \raisebox{1mm}{\dpoint{i}}  
\]
usually referred to as \em collapse\em.
 
All of the above measurements are called \em non-degenerate\em, due to the manner they are defined in terms of unitaries and spiders only.  A more general kind of measurement is called a \em von Neumann measurement\em.  These are quantum processes:  
\ctikzfig{measdiag1}
satisfying:
\[
  \input{./figures/diagramsUUorth3.tikz}
\]
What this diagrammatic equation tells us is that performing the von Neumann measurement twice produces the same result as performing it once and then simply copying the outcome. That is, after performing a von Neumann measurement, any subsequent performances will yield the same outcome, and have no further effect on the quantum state. This defining property of von Neumann measurements is called \em von Neumann's projection postulate\em.

It's easy to see that measurements of the form (\ref{eq:non-dem-map}) are von Neumann measurements, for example, in the case of (\ref{eq:CQMII-7}) we can set:
  \ctikzfig{P-spider}
since, using bastard spider-fusion we have:    
  \ctikzfig{onb-is-vN}
Then, again using bastard spider-fusion:
  \ctikzfig{onb-vN-orth}
so we indeed obtain a von Neumann measurement.  

The focus on von Neumann measurements is historical in origin.   However, there isn't really any reason not to call any process on a quantum system that provides us with some classical data a quantum measurement.    So a demolition quantum measurement is really just a generic process from a quantum to a classical system: 
\ctikzfig{demo-povm}
Standard terminology for this is a \em demolition POVM-measurement\em. By purification we obtain a non-demolition counterpart:
\begin{equation}\label{eq:povm-non-dem}
  \input{./figures/POVMg.tikz}
\end{equation}
where $\widehat U$ is an isometry (which follows immediately from causality, cf.~Section 5.2 in \cite{CKpaperI}).
Now, just by staring at this picture we obtain the following result:

\begin{theorem}[Naimark dilation]
Every non-demolition POVM-measurement arises as an isometry $\widehat U$ with a  non-degenerate measurement at one of its outputs.
\end{theorem}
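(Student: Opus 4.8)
The plan is to observe that the purification already carried out to produce \eqref{eq:povm-non-dem} \emph{is} the Naimark form, so that the theorem is obtained by inspecting that diagram rather than by any further computation. Indeed, the preceding paragraph constructs the non-demolition POVM-measurement precisely as the purified counterpart of a demolition POVM-measurement, and \eqref{eq:povm-non-dem} already displays it as a pure isometry $\widehat{U}$, one of whose outputs carries a retained quantum system while the other feeds a measurement producing the classical data. The whole content of the argument is therefore to recognise that this latter measurement is \emph{non-degenerate} and that $\widehat{U}$ is genuinely an \emph{isometry}; everything else is diagram-reading.

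First I would recall that a demolition POVM-measurement is, by definition, an arbitrary quantum process from a quantum system to a classical system. Applying purification to it — writing the underlying quantum process in the pure-plus-discarding form \eqref{eq:quantummapx_CQMII} — yields exactly \eqref{eq:povm-non-dem}: the input quantum wire enters the pure isometry $\widehat{U}$, which emits two quantum wires, and the standard spider-measurement (the one introduced at the start of this section) is applied to one of them to emit the classical wire. Next I would confirm that this is the correct non-demolition counterpart by discarding the retained quantum output of \eqref{eq:povm-non-dem} and checking that it collapses back to the demolition POVM we started from; this is immediate from the isometry property together with the causality equation \eqref{eq:grounddecompa-CQMII}. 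Finally I would identify the measurement on the distinguished output as the standard ONB-measurement, which is non-degenerate precisely because it is assembled from spiders (and unitaries) alone, in accordance with the definition of non-degenerate measurement given above.

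The one point requiring care — and the nearest thing to an obstacle — is the justification that $\widehat{U}$ is an \emph{isometry} rather than an arbitrary pure map, since this is what distinguishes a bona fide dilation. This does not come from purification alone but from \textbf{causality} of the non-demolition measurement: the requirement that discarding all outputs equals discarding the input forces $\widehat{U}$ to satisfy the isometry equation \eqref{eq:unitary}, exactly as in Section 5.2 of \cite{CKpaperI}. Once this is in hand there is nothing further to prove, since \eqref{eq:povm-non-dem} is then literally an isometry $\widehat{U}$ with a non-degenerate measurement at one of its outputs, which is the assertion of the theorem. I would accordingly expect the write-up to be very short, with the causality-to-isometry implication being the only step at which a genuine argument, as opposed to reading off the picture, is invoked.
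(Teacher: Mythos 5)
Your proposal is correct and matches the paper's argument: the paper likewise obtains the theorem by reading \eqref{eq:povm-non-dem} in two ways --- as the purification of a demolition POVM-measurement on one hand, and as an isometry $\widehat U$ followed by a non-degenerate (spider) measurement on one output on the other --- with the isometry property of $\widehat U$ coming from causality exactly as you say. The extra checks you include (discarding the retained output recovers the original demolition POVM) are consistent with, though not spelled out in, the paper's one-line justification.
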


Indeed this theorem just follows from two different readings of \eqref{eq:povm-non-dem}:
\ctikzfig{CQMII-11}

% Consequently, any demolition POVM-measurement takes the following form:  
% \ctikzfig{CQMII-10}

\subsection{Classical data flow}

In \cite{CKpaperI} we saw the following description of quantum teleportation:
\ctikzfig{CQMII-12}
Here, the left-most index $i$ in the LHS refers to a measurement outcome, while the right-most one refers to the dependency of that operation on the measurement outcome.  We are now in a position to replace these indices with a classical wire. One tiny cosmetic change we will make is to slide the left-most box across the cup:
\ctikzfig{CQMII-13}
This will guarantee the classical measurement output is indeed an output.

In the previous section we already discussed what a measurement is, but what about the measurement-outcome dependent \em controlled \em operation?  In particular, what does it mean for a collection of unitaries to be controlled by a classical input?  One can capture this kind of process, called a \textit{controlled unitary} as a classical-quantum process:
\ctikzfig{controliso1}
satisfying two equations:
\beq\label{eq:diagramsWW3copy}
\input{./figures/diagramsWW3copy.tikz}\qquad\qquad\quad\input{./figures/diagramsWW4copy.tikz}      
\eeq
In each of the equations, the classical input on the LHS is copied and used to pick the same unitary out of each occurence of $\widehat U$.  These are then composed in the two possible orders, in order to obtain the two defining equations (\ref{eq:unitary}) of unitarity.  The presence of deleting in each of the two RHS witnesses that for each pick of unitary we always get the same composite, namely a plain doubled wire.     

Taking the left-most of these two equations and inserting a uniform probability distribution (\ref{eq:uniform-probability-distribution}) in the classical input we obtain:
\ctikzfig{diagramsWW3copycopy}
Just a bit of un-yanking and some boxing now yields:  
\ctikzfig{CQMII-14}
The only thing we haven't shown is that what is intended to be a measurement is indeed a measurement.  For it to be a non-degenerate measurement this would mean:
\beq\label{eq:cq-bell-meas-mapBIS}
\input{./figures/cq-bell-meas-mapBIS.tikz}
\eeq

Unitarity of course means that composing in either direction yields plain wire(s). Specialising this to the unitary in \eqref{eq:cq-bell-meas-mapBIS} yields two new equations:
\beq\label{eq:CQMII-15}
\input{./figures/CQMII-15.tikz}
\eeq
\begin{equation}\label{eq:meas-unitary2}
  \input{./figures/CQMII-20.tikz}
\end{equation}
From this equation between diagrams it is immediately obvious that the dimension type $\widehat C$ should be large enough to realise two plain wires of type $\widehat Q$.  Concretely, this means that we need $D^2$-dimensional classical data for teleporting a $D$-dimensional quantum system.

We'll now demonstrate that the classical communication is crucial to performing teleportation, using the diagram above. Namely, let's see what happens if we don't use the measurement outcome, i.e.~delete it:    
\ctikzfig{CQMII-17}
By causality we then have:
\ctikzfig{CQMII-16}
so as a result, all Bob obtains is maximal mixture:  
\ctikzfig{CQMII-18}

%Assume now we take a non-degenerate measurement (\ref{eq:cq-bell-meas-mapBIS}).  Then, besides equation (\ref{eq:CQMII-15}) we also have:

Turning the quantum inputs/outputs of equation~\eqref{eq:meas-unitary2} into classical inputs/outputs via encode/measure yields:
\ctikzfig{CQMII-19}
We have suggestively drawn in some boxes to make this look like a new protocol. But what does this protocol achieve? It transmits $D^2$-dimensional classical data using only a $D$-dimensional quantum channel, provided Aleks and Bob share a Bell state.  The is called \em (super)dense coding\em. 

One more variation on the same theme is the following:
\ctikzfig{swapwireonly1CQM}
Using spider fusion and  the first equation of  (\ref{eq:diagramsWW3copy}) we obtain:    
\ctikzfig{swapwireonly2CQM}
So what does this achieve?  We start with a pair of Bell-states:  
\[
\input{./figures/qwertyu.tikz}
\]
...and end up with a pair of Bell-states:  
\[
\input{./figures/fishdyn5bis.tikz}
\]
However, while originally system {\bf 1a} was entangled with {\bf 1b} and {\bf 2a} with {\bf 2b}, the final state has {\bf 1a} entangled with {\bf 2b} and {\bf 1b} with {\bf 2a}. In other words, the entanglements have been `swapped'. This procedure is called  \em entanglement swapping\em.  The amazing bit about this is that {\bf 1a} becomes entangled to  {\bf 2b} while these systems were never acted upon together.  In other words, quantum theory allows for \em entangling without touching\em.

Note that while the teleportation protocol captures in essence the defining equation of string diagrams: 
\[
\input{./figures/line_yank3.tikz}
\]
which involves two caps/cups in LHS, entanglement swapping is the next one in line. It captures an equation involving three caps/cups in LHS:
\[
\input{./figures/line_yank2.tikz} 
\]

A practical use of entanglement swapping for  quantum technologies  is to generate 
entanglement over a large distance, given that one possesses some entangled states over shorter distances.  A device that performs entanglement swapping for this particular purpose is sometimes called a \textit{quantum repeater}, and is a crucial component to the feasibility of producing high-quality entangled states over long distances. It is called a repeater by analogy to classical signal repeaters, which make it possible to send messages long distances by occasionally capturing the signal and `repeating' an amplified version of the signal down the wire.

\subsection{Mixing}

Sometimes we don't have precise knowledge about which of a collection of quantum processes actually occurred. We can express a collection of quantum processes as a single quantum process controlled by a classical input. Then, our uncertainty can be represented by plugging a probability distribution into this input, which gives:

\begin{definition}
A \em mixture \em of quantum processes is a diagram of the form:
\begin{equation}\label{eq:mixture}
\input{./figures/CQMII-4.tikz}
\end{equation}
\end{definition}

Using the defining equation of probability distributions (\ref{eq:measure-rho}) and (bastard) spider-fusion,  mixtures can always be redrawn as follows: 
\beq\label{eq:CQMII-5}
\input{./figures/CQMII-5.tikz}
\eeq
That is, mixtures can be equivalently interpreted as discarding an output of a quantum system.

\begin{convention}\label{conv:full-support}
To simplify the proofs to follow, we will always assume mixtures have \textit{full support}. That is, for the probability distribution $p$ in \eqref{eq:mixture}, there is always another (possibly non-causal) classical state $p^{-1}$ such that:
\beq\label{eq:bekan12}
\input{./figures/bekan12.tikz}  
\eeq
Concretely, if $p$ consists of probabilities $p_i$, having full support means that all $p_i \neq 0$. Hence, we can form $p^{-1}$ whose entries are $1/p_i$. Assuming full support is no loss of generality, since we can always get rid of $0$'s by just choosing a smaller classical system for the mixture.
\end{convention}

Of particular interest are mixtures of pure quantum processes, and these can always be interpreted as discarding an output of a pure quantum process, simply by also `purifying' the state $\boldsymbol\rho$ in the RHS of (\ref{eq:CQMII-5}):
\ctikzfig{CQMII-6}
However, the converse is not true, that is, not every quantum process, which by (\ref{eq:quantummapx_CQMII}) all arise by discarding an output of a pure quantum process, can be interpreted as a mixture of pure quantum processes. In particular, discarding itself cannot be interpreted as a mixture of pure quantum effects, for the simple reason that there aren't any pure quantum effects (cf.~Theorem 5.2 in \cite{CKpaperI}).

We now use diagrammatic reasoning to prove something that is usually seen as a geometric property of quantum processes, namely that pure processes are `extremal' in the sense that no pure process can be decomposed as a non-trivial mixture of other processes.

\begin{proposition}\label{prop:regularmixing}
If a mixture is pure:
\[
\input{./figures/bekan13.tikz} \ \ = \ \dmap{\widehat f}  
\]
then: 
\beq\label{eq:regularmixing2}
\input{./figures/bekan15.tikz}\ \  = \ \   \input{./figures/bekan13a.tikz}
\eeq
\end{proposition}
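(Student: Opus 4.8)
The plan is to reduce the statement to the already-established Proposition~\ref{prop:discard-mix-pure-proc-class}, which is the ``state-level'' version of the same extremality phenomenon, and then to invoke the full-support Convention~\ref{conv:full-support} to upgrade the conclusion from the $p$-weighted mixture to the bare controlled process.

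First I would rewrite the hypothesis using \eqref{eq:CQMII-5}: the mixture on the left is by definition the probability distribution $p$ plugged into the classical control of the underlying controlled process, and \eqref{eq:CQMII-5} lets me present this equivalently as a single quantum process with one extra output that is then discarded. The purpose is to bring the diagram into exactly the shape to which Proposition~\ref{prop:discard-mix-pure-proc-class} applies, with the classical control now read as the classical output that gets deleted. Concretely, I would absorb $p$ into the controlled process by composing it, via the classical multiplication spider, onto the control wire; because $p$ has full support this operation is invertible, its inverse being composition with the $p^{-1}$ guaranteed by \eqref{eq:bekan12}. After this absorption, deleting the classical output reproduces exactly the mixture, which by hypothesis equals the pure process $\dmap{\widehat f}$.

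Now Proposition~\ref{prop:discard-mix-pure-proc-class} (whose hypothesis~\eqref{EQ:daggeraxiom} we inherit) applies verbatim and tells me that the absorbed controlled process $\otimes$-separates as a classical state $\point{p}$ tensored with $\dmap{\widehat f}$; that is, once the classical wire is present the process is a product of a distribution on the control and the single fixed pure map $\widehat f$, carrying no genuine dependence of the quantum part on the control value. The final step is to undo the absorption: composing with $p^{-1}$ on the control strips away the distribution factor (full support is exactly what makes this legitimate), leaving the bare controlled process equal to a residual classical factor tensored with $\widehat f$; causality then forces that residual factor to be deleting, yielding the claimed form~\eqref{eq:regularmixing2}, in which the control is simply deleted and every branch of the mixture is the same pure process $\widehat f$.

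I expect the main obstacle to be the bookkeeping around the classical control wire: matching the ``input control with $p$ plugged in'' picture of a mixture against the ``classical output being deleted'' picture required by Proposition~\ref{prop:discard-mix-pure-proc-class}, and in particular making precise that composing with $p$ and then with $p^{-1}$ really does transfer the conclusion from the $p$-weighted process to the unweighted controlled process, rather than only recovering the weaker averaged statement. The spider manipulations themselves---bending the classical wire, fusing the $p$ and $p^{-1}$ spiders, and collapsing the residual classical effect to deleting by causality---are routine, so the conceptual content lives entirely in this input/output reconciliation together with the invocation of full support.
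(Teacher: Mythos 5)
Your proposal follows the paper's own proof essentially step for step: rewrite the mixture as a reduced cq-map via \eqref{eq:CQMII-5}, invoke Proposition~\ref{prop:discard-mix-pure-proc-class} to obtain the $\otimes$-separation into a classical state tensored with $\widehat f$, cancel the distribution using the full-support inverse $p^{-1}$ from \eqref{eq:bekan12} together with spider fusion, and finally use causality to force the residual classical effect to be deleting. The approach and all key ingredients match the paper's argument, so no further comment is needed.
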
   
\begin{proof}
First, we represent any mixture as a reduced cq-map:
\[
\input{./figures/bekan13.tikz}\ \ =\ \ \input{./figures/bekan18class.tikz}
\]
By Proposition \ref{prop:discard-mix-pure-proc-class} the quantum process must separate as follows:  
\[
\input{./figures/bekan20class.tikz}\ \ =\ \point{p'\!}\ \dmap{\widehat f}    
\]
We assume $p$ has full support, so apply $p^{-1}$ to both sides:
\ctikzfig{bekan24class-a}
Using spider-fusion  the LHS becomes:  
\[
\input{./figures/bekan24class.tikz}\ \ = \ \ \input{./figures/bekan25class.tikz}\ \ \namedeq{(\ref{eq:bekan12})} \ \ \input{./figures/bekan26.tikz}
\]
so bending the classical wire down gives:
\begin{equation}\label{eq:bacon-mix}
  \input{./figures/bacon1.tikz}
\end{equation}
Finally, the fact that the classical effect in \eqref{eq:bacon-mix} must be deleting follows from causality of $\Phi$ (which also implies causality of $\widehat f$).  This can be seen by plugging any causal state $\bm\rho$ into the quantum input and discarding the output:
\ctikzfig{bacon2}
\end{proof}

\subsection{Entanglement}  

For pure quantum states \em entanglement \em simply means non-separability.  That is, a bipartite state is \underline{not} entangled if it $\otimes$-separates as follows:
\[
\dbistate{\widehat\psi}\ =\ \dkpoint{\psi_1}\ \dkpoint{\psi_2}
\]
The tricky bit about entanglement for impure states, is that by means of  mixing we can obtain $\otimes$-non-separable processes even in cases where the part of the process connecting the two systems is entirely classical:
\begin{equation}\label{eq:bipmix}
  \input{./figures/bipmix.tikz}
\end{equation}
We say that these processes are \textit{classically correlated}. Such a connection is of a completely different nature from, for example, the quantum cups that we can exploit for deriving all kinds of quantum features.

Thus, to properly define entanglement, we should say not only that a quantum state doesn't separate, but also that it is not merely connected by classical correlations:

\begin{definition}\label{prop:entanglement}
A bipartite quantum state $\boldsymbol\rho$ is \em entangled \em if it \underline{cannot} be written in the following form for some $\Phi_1$ and $\Phi_2$:
\beq\label{eq:bipmix3}
\input{./figures/bipmix3.tikz}
\eeq
If a state is not entangled then we call it \em disentangled\em.  
\end{definition}

Even though we define disentangled states using the classical cup, we could equally well use any classical correlations, as we did in \eqref{eq:bipmix} above:  

\begin{proposition}\label{prop:entanglementsimple} 
A bipartite quantum state $\boldsymbol\rho$ is entangled if it \underline{cannot} be written in the following form for some $\Phi_1$,  $\Phi_2$ and probability distribution $p$:
\beq\label{eq:bipmix2}
\input{./figures/bipmix2.tikz}
\eeq
\end{proposition}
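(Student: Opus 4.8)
The plan is to prove the proposition by showing that the set of states expressible in the classical-cup form (\ref{eq:bipmix3}) of Definition~\ref{prop:entanglement} coincides with the set expressible in the probability-distribution form (\ref{eq:bipmix2}); once these two families of \emph{disentangled} states agree, their complements, the entangled states, agree as well, which is exactly the content of the proposition. Both inclusions come down to the same observation: the two diagrams differ only in \emph{where the classical weights sit}. In (\ref{eq:bipmix3}) the two processes are joined by a bare classical cup, whereas in (\ref{eq:bipmix2}) they are joined by a copy-spider fed with a probability distribution $p$; by the spider rule writing the classical cup as the copy-spider applied to the unit, feeding $p$ through the copy-spider is the same as taking a classical cup and inserting the diagonal classical process $\mathrm{diag}(p)$ (that is, $p$ bent into an endomap of the connecting wire) on one of its two legs. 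The whole proof is then a matter of sliding this diagonal onto $\Phi_1$ or off of it.

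For the inclusion (\ref{eq:bipmix2}) into (\ref{eq:bipmix3}), I would start from a state of the form (\ref{eq:bipmix2}), rewrite the $p$-fed copy-spider as a classical cup carrying $\mathrm{diag}(p)$ on its left leg as above, and then absorb $\mathrm{diag}(p)$ into $\Phi_1$, setting $\Phi_1' := \Phi_1 \circ \mathrm{diag}(p)$. Since $\mathrm{diag}(p)$ is a classical process (a non-negative diagonal on a classical wire) and the classical-quantum processes are closed under composition by Proposition~\ref{prop:cq-boxclosedness}, the new map $\Phi_1'$ is again a cq-map, so the state is now presented in the classical-cup form (\ref{eq:bipmix3}) with $\Phi_1'$ and $\Phi_2$.

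For the converse inclusion (\ref{eq:bipmix3}) into (\ref{eq:bipmix2}), I would invoke Convention~\ref{conv:full-support} and choose any full-support distribution $p$, the uniform distribution (\ref{eq:uniform-probability-distribution}) being the natural choice, so that the inverse $p^{-1}$ of (\ref{eq:bekan12}) is available. Writing the bare classical cup of (\ref{eq:bipmix3}) as a $p$-fed copy-spider precomposed with $\mathrm{diag}(p^{-1})$ on one leg, I would then absorb $\mathrm{diag}(p^{-1})$ into $\Phi_1$, setting $\Phi_1' := \Phi_1 \circ \mathrm{diag}(p^{-1})$; because $p$ has full support the entries $1/p_i$ are well-defined and non-negative, so $\mathrm{diag}(p^{-1})$ is again a classical process and $\Phi_1'$ a cq-map. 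This exhibits the state in the form (\ref{eq:bipmix2}). The only point requiring care, and the natural place for the argument to break down, is precisely this last step: guaranteeing that relocating the weights keeps $\Phi_1$ a legitimate quantum process. In the forward direction this is exactly what forces the full-support hypothesis, since without it $p^{-1}$ need not exist, whereas absorbing weights in the other direction is unconditional because multiplication by the non-negative $p_i$ never leaves the class of cq-maps. The argument is symmetric in $\Phi_1$ and $\Phi_2$, so absorbing into $\Phi_1$ is without loss of generality.
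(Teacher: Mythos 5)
Your argument is correct and follows essentially the same route as the paper: both inclusions are obtained by spider (un)fusion, trading the bare classical cup for a copy-spider fed with a full-support (in practice the uniform) distribution and absorbing the leftover classical weights into one of the $\Phi_i$. The paper's own proof consists of exactly these two diagram rewrites, with the uniform distribution playing the role of your chosen $p$ in the cup-to-copy direction.
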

\begin{proof}
If a state is disentangled as in Definition \ref{prop:entanglement} then:
\ctikzfig{bipmix5}
so it matches the form (\ref{eq:bipmix2}). Conversely, we have:
\ctikzfig{bipmix4} 
so it is indeed in the form (\ref{eq:bipmix3}). 
\end{proof}

...and pure state entanglement is indeed a special case:

\begin{proposition}  
If a \underline{pure} bipartite quantum state is disentangled in the sense of Definition \ref{prop:entanglement}, then it is $\otimes$-separable.      
\end{proposition}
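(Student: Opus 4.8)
The plan is to exhibit the disentangled pure state as a mixture whose components are already $\otimes$-separable pure states, and then use the extremality of pure processes (Proposition \ref{prop:regularmixing}) to collapse that mixture to a single component.

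First I would put the hypothesis into mixture form. By Proposition \ref{prop:entanglementsimple} the disentangled state may be written as in (\ref{eq:bipmix2}), namely as a mixture over a probability distribution $p$ of products $\Phi_1 \otimes \Phi_2$ controlled by a shared classical value. Since $\Phi_1$ and $\Phi_2$ are general (possibly mixed) quantum states, the next move is to purify them: by (\ref{eq:quantummapx_CQMII}) each arises by discarding an output of a pure process, and a mixture of a mixture is again a mixture. Folding the resulting environment labels into the shared classical control by (bastard) spider-fusion, I can re-present $\boldsymbol\rho$ as a single mixture, over some distribution $p'$, of the pure product states $\dkpoint{\psi_1}\,\dkpoint{\psi_2}$. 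Each such component is pure, being a $\otimes$-product of two doubled states.

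Next I would apply Proposition \ref{prop:regularmixing}. By hypothesis this mixture equals the given pure state, so the proposition forces the mixture to be trivial: the controlled product state is constant on the support of $p'$ and equal to $\boldsymbol\rho$. The full-support Convention \ref{conv:full-support} guarantees the support is nonempty, so picking any value in it yields
\[
\dbistate{\boldsymbol\rho}\ \ =\ \ \dkpoint{\psi_1}\ \dkpoint{\psi_2},
\]
which is precisely $\otimes$-separability.

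The main obstacle is the bookkeeping in the first step: one must genuinely merge the outer disentanglement index with the inner purification labels of $\Phi_1$ and $\Phi_2$ into one classical wire carrying a single probability distribution, so that the object really is a mixture in the sense required by Proposition \ref{prop:regularmixing}. The payoff of purifying \emph{before} collapsing the mixture is that the surviving single component is automatically a product of \emph{pure} states; this sidesteps the otherwise delicate point that a product of mixed states can only be pure when each factor is itself pure, which would require a separate appeal to the purity axiom (\ref{EQ:daggeraxiom}).
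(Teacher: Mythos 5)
Your plan hinges on rewriting the disentangled state as a mixture whose components are \emph{pure} product states, and the step you use to get there is a genuine gap. In the form \eqref{eq:bipmix2} the components are products of the two states obtained by feeding the shared classical value into $\Phi_1$ and $\Phi_2$, and each factor may be impure; you propose to ``purify'' each factor via \eqref{eq:quantummapx_CQMII} and fold the resulting labels into the classical control. But \eqref{eq:quantummapx_CQMII} expresses an impure state as a pure state with a \emph{discarded quantum output}, not as a classically indexed mixture of pure states, and the paper explicitly cautions (in the discussion following \eqref{eq:CQMII-5}) that the implication runs only one way: every mixture of pure processes arises by discarding an output of a pure process, but not conversely. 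Turning the discarded environment into a classical index would require replacing discarding by measure-then-delete via \eqref{eq:grounddecompa-CQMII}, showing that the state conditioned on each measurement outcome is pure up to a number, and renormalising so that the controlled family is causal and the weights form a probability distribution, as the definition of a mixture \eqref{eq:mixture} demands. None of that is supplied, and it is not available as a black box anywhere in the paper, so this is a missing idea rather than bookkeeping.

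The good news is that the detour is unnecessary: Proposition \ref{prop:regularmixing} applies to mixtures whose components are arbitrary causal processes, not just pure ones. Applying it directly to the form \eqref{eq:bipmix2} (the controlled process obtained by copying the classical input and feeding it to $\Phi_1\otimes\Phi_2$ is causal, and full support may be assumed by Convention \ref{conv:full-support}) shows that this controlled process is constant; plugging in any deterministic classical value $i$ and using the copying equation \eqref{eq:copyableonly} then exhibits $\boldsymbol\rho$ as the $\otimes$-product of two states, which is exactly the claimed separation---note that the paper's own proof likewise does not insist that the two factors be pure. That repaired argument is a legitimately different route from the paper's: the paper instead applies Proposition \ref{prop:discard-mix-pure-proc-class} once to the classical-cup form \eqref{eq:bipmix3} to peel off a probability distribution $p$ as in \eqref{eq:Ent5}, then plugs a deterministic effect $i$ having nonzero overlap with $p$ into the classical wire and splits the cup with \eqref{eq:copyableonly}. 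The two arguments are close cousins, since Proposition \ref{prop:regularmixing} is itself proved from Proposition \ref{prop:discard-mix-pure-proc-class}, but the paper's version avoids the mixture formalism altogether, while yours, once corrected, buys a tidy one-line reduction to the extremality of pure processes.
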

\begin{proof}
Represent any disentangled quantum state as follows: 
\[
\input{./figures/Ent2.tikz}\ \ = \ \  \input{./figures/Ent3.tikz} 
\]
Then, by Proposition \ref{prop:discard-mix-pure-proc-class} we have:
\beq\label{eq:Ent5}
 \input{./figures/Ent5.tikz} \ \ = \  \, \point{p}\,  \ \input{./figures/Ent2.tikz}    
\eeq
with $p$ a probability distribution.   For any deterministic classical state $i$ such that:
\[
\innerprodmap{i}{p}\   \neq \ 0
\] 
we then have:
\[ \begin{array}{ccccc}
\input{./figures/Ent2.tikz} 
& \scalareq &
 \!  \innerprodmap{i}{p}\ \, \input{./figures/Ent2.tikz} 
& \namedeq{(\ref{eq:Ent5})} &
  \input{./figures/Ent6.tikz}  \\
& \namedeq{(\ref{eq:copyableonly})} &
   \input{./figures/Ent6a.tikz}
& \namedeq{(\ref{eq:copyableonly})} &
  \input{./figures/Ent7.tikz} 
\end{array} \]
\end{proof}

Since the quantum cup is pure and $\otimes$-non-separable, we obtain:

\begin{corollary}
The quantum cup is entangled.          
\end{corollary}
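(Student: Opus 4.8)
The plan is to read this statement as the contrapositive of the Proposition that immediately precedes it. That Proposition asserts that a \emph{pure} bipartite quantum state which is disentangled, in the sense of Definition~\ref{prop:entanglement}, must $\otimes$-separate. Consequently, to conclude that the quantum cup is entangled it suffices to establish two facts: that the quantum cup is a pure bipartite quantum state, and that it is \emph{not} $\otimes$-separable. Granting these, if the quantum cup were disentangled, the Proposition would force it to $\otimes$-separate, contradicting non-separability; hence it cannot be disentangled, i.e.\ it is entangled.

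Purity is immediate from the definition of a pure quantum process recalled earlier: the quantum cup is by construction the doubling of the ordinary cup, and doublings that do not involve discarding are precisely the pure quantum processes. So the only genuine work is the $\otimes$-non-separability of the quantum cup, which the surrounding text treats as evident but which I would justify explicitly as follows.

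First I would suppose, for contradiction, that the quantum cup separates as a product $\dkpoint{\psi_1}\,\dkpoint{\psi_2}$ of two doubled states, one on each of its output systems. Next I would yank: attaching a quantum cap to one of its two legs and reading off the yanking equation~\eqref{eq:yankforccorrcamp-pre} rewrites this composite as a single plain doubled identity wire $\didwire$. On the separated right-hand side, however, the cap pairs off the capped output with the external input wire, producing a diagram that factors through the trivial system---an effect on the input, followed by an unrelated state on the output, with no wire joining them. Thus the identity wire $\didwire$ would equal a disconnected diagram. This is exactly the contradiction exploited in the proof of Proposition~\ref{prop:decoherence}, where a disconnected identity is ruled out; hence no such separation exists, and the quantum cup is $\otimes$-non-separable.

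The step I expect to be the crux is this last one: turning ``the identity is disconnected'' into a genuine contradiction. The argument is clean only once one fixes that the quantum wire in question is nontrivial, since on a one-dimensional system the cup really does factor as a product of scalars and the claim degenerates. I would therefore make explicit that we work on a system for which $\didwire$ is not a state--effect product---precisely the hypothesis already used implicitly in Proposition~\ref{prop:decoherence}---so that the two readings of the yanked diagram cannot coincide, and non-separability, and hence entanglement, follows.
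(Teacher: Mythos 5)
Your proof is correct and follows essentially the same route as the paper: the corollary is read as the contrapositive of the immediately preceding proposition, given that the quantum cup is pure and not $\otimes$-separable. The only difference is that you explicitly justify the non-separability (via yanking and the ``disconnected identity'' contradiction, with the sensible caveat about trivial one-dimensional systems), whereas the paper simply asserts it as evident; your added argument is sound and consistent with the technique already used in the paper's proof of Proposition~\ref{prop:decoherence}.
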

  
Having diagrammatic forms at hand for many quantum features, we can now easily investigate how these relate.  For example, what happens if we apply decoherence to one of two systems in a Bell-state?  It disentangles:
\ctikzfig{bipmix7}
This is an instance of a more general fact: 

\begin{theorem}\label{thm:deckillsentanglement}
Decoherence destroys entanglement, that is, if we apply decoherence to one of two systems in an entangled state, then it disentangles.
\end{theorem}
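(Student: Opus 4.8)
The plan is to show that for an \emph{arbitrary} bipartite quantum state $\boldsymbol\rho$ on systems $A \otimes B$, applying decoherence to the system $A$ always yields a state of the disentangled form~(\ref{eq:bipmix3}). Since being disentangled is exactly the negation of entanglement in the sense of Definition~\ref{prop:entanglement}, this proves the theorem a fortiori: it shows the output is disentangled even when the input was not entangled to begin with, which in particular covers the case where $\boldsymbol\rho$ was entangled.

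First I would unfold decoherence according to its definition as measuring followed by encoding, so that the leg to which it is applied now passes through a single \emph{classical} waist wire. This is precisely the observation, recorded in the diagram following Proposition~\ref{prop:decoherence}, that the middle bit of decoherence is classical. Thus the decohered state factors as: the state $\boldsymbol\rho$ together with a measurement on $A$ (producing a classical output), whose classical wire is then fed into an encoding that regenerates a quantum system $A'$, while the system $B$ is left untouched throughout.

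Next I would split the diagram at this classical waist. Reading the part below the waist --- namely $\boldsymbol\rho$ together with the measurement on $A$ --- as one cq-process with quantum output $B$, and the encoding above the waist as a second cq-process with quantum output $A'$, the two halves are connected by nothing but a single classical wire. Using the classical spider rules (that a single wire, cups and caps are all spiders, together with spider-fusion~(\ref{eq:bastards-compose-bis}), and the self-conjugacy of classical values), I would bend this connecting wire into a classical cup. This lands the two halves in the roles of $\Phi_1$ and $\Phi_2$ in~(\ref{eq:bipmix3}), exhibiting the decohered state in the disentangled form. Equivalently, I could first copy the measurement outcome with the copy-spider and route the two copies into the two halves, arriving instead at the form~(\ref{eq:bipmix2}) of Proposition~\ref{prop:entanglementsimple}; the two routes agree by spider-fusion.

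The main obstacle --- and really the only content --- is to verify that the cut at the waist genuinely severs all \emph{quantum} connectivity between $A'$ and $B$, leaving behind only a classical correlation. This is exactly why decoherence, as opposed to an arbitrary quantum channel, destroys entanglement: once the waist is identified as a single classical wire, there is no residual cup or cap of doubled wires linking the two systems, so the only surviving link between $A'$ and $B$ is the classical cup. After that, the rerouting is a routine application of spider-fusion, and the identification of the two halves as legitimate quantum processes $\Phi_1,\Phi_2$ is immediate. The worked computation for the Bell state shown just above the theorem is then simply the instance $\boldsymbol\rho =$ (quantum cup) of this general argument.
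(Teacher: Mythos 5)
Your proposal is correct and matches the paper's proof, which is exactly the single diagram computation you describe: unfold decoherence on one leg as measure-then-encode, observe that the only link between the two systems is now a single classical wire at the waist, and bend that wire into a classical cup via spider-fusion to exhibit the form~(\ref{eq:bipmix3}) for an arbitrary input state. The worked Bell-state example preceding the theorem is, as you say, just the special case of this general rewriting.
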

\begin{proof}
We have:
\ctikzfig{bipmix6}
\end{proof}

\subsection{Classifying entanglement}\label{sec:classify}  

Entanglement can be used as a resource for all sorts of tasks, ranging from secure communication to experimental demonstrations of quantum non-locality. But what sorts of entangled states are good for what tasks? To make this problem simpler, rather than considering all states, we can instead only think about states which are really distinct in terms of their capabilities. Thus, we should say what it means for entangled states to be equivalent, or more precisely: inter-convertible. In fact, there is not just one possibility, but many possibilities, depending on which operations we allow for converting one state into another. To borrow the terminology of resource theories, we call these \textit{free processes}, implying that they are sufficiently easy to realise that they do not add any `value' to the resource we already have. In the case of entanglement, free processes should, at the very least, not introduce any new entanglement.  

To identify what these processes should look like, let us suppose that some number of parties only have access to a single system of a quantum state. For instance, with two parties, the  situation looks like this:
\ctikzfig{resources-state}
Then, the most general family of (causal) processes which cannot introduce any new entanglement consist of \textit{\underline{L}ocal \underline{O}perations}, i.e.~quantum processes localised just to a single party:      
\ctikzfig{resources18}
and \textit{\underline{C}lassical \underline{C}ommunication}, i.e.~classical wires connecting the two parties:
\ctikzfig{snake7}
We refer to any composition of such processes as a \textit{LOCC process}, if a state $\widehat\psi$ can be converted into a state $\widehat\psi'$ by means of a LOCC process, it is called \textit{LOCC-convertible}, and mutually LOCC-convertible processes are called \textit{LOCC-equivalent}.  For bipartite pure states, LOCC-convertibility is well-understood, and can be totally characterised by the \textit{majorization pre-ordering} (see further reading below), but this becomes much more difficult for three or more systems. 

Fortunately, one can coarse-grain LOCC by dropping the restriction that the local processes above should be causal. A consequence of this is that we can now rely on local processes which may only occur with some non-zero probability, therefore one now speaks of \textit{\underline{S}tochastic local operations and classical communication} (SLOCC). In fact, once we drop causality, classical communication becomes redundant:

\begin{theorem}\label{thm:slocc-no-cc}
A bipartite state $\widehat\psi$ can be SLOCC-converted into a bipartite state $\widehat\psi'$ if and only if there exist (not necessarily causal) quantum processes $\Phi_1$ and $\Phi_2$ such that:
  \begin{equation}\label{eq:slocc-sep}
   \input{./figures/slocc-sep.tikz}
  \end{equation}
\end{theorem}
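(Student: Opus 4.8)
The plan is to prove the two implications separately, with essentially all of the content residing in the ``only if'' direction. For the ``if'' direction I would simply observe that the factorised form \eqref{eq:slocc-sep} is itself a (degenerate) SLOCC protocol: it consists of one local operation $\Phi_1$ acting only on the first party's system and one local operation $\Phi_2$ acting only on the second's, with \emph{no} classical communication at all. Since the theorem explicitly allows $\Phi_1$ and $\Phi_2$ to be non-causal, this is a legitimate SLOCC conversion, so the mere existence of such $\Phi_1,\Phi_2$ witnesses that $\widehat\psi$ is SLOCC-convertible into $\widehat\psi'$.

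For the ``only if'' direction I would start from an arbitrary SLOCC process taking $\widehat\psi$ to $\widehat\psi'$. By definition this is some composite of local (non-causal) quantum processes, each wired only to the first party's systems or only to the second's, together with classical wires that carry data across the cut separating the two parties. The crucial move is that any such crossing classical wire can be severed using the spider structure: by the defining copyability of classical values \eqref{eq:copyableonly} (equivalently, the fact that the classical values form an ONB), a classical identity wire resolves as a sum over $i$ of a local classical effect $\pointdag{i}$ on the sending party and a local classical state $\point{i}$ on the receiving party. Substituting this decomposition into every crossing wire rewrites the whole protocol as a sum, indexed by tuples of classical values, of diagrams in which the first party's region is entirely disconnected from the second's. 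Because the local operations are now permitted to be non-causal, I may retain a single term of this sum --- equivalently, post-select each communicated value to the outcome produced on the successful branch of the stochastic protocol. Collecting all of the first party's local operations, together with the inserted effects and states, into one non-causal quantum process $\Phi_1$, and likewise the second party's into $\Phi_2$, then yields precisely the form \eqref{eq:slocc-sep}.

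The step I expect to demand the most care is the claim that cutting every crossing classical wire genuinely leaves two disconnected regions. An arbitrary SLOCC protocol may interleave many rounds of local operations with back-and-forth communication, so I would need to argue that throughout the protocol the first party's quantum and classical systems are never wired to the second's except via the classical communication wires themselves; only then does severing those wires split the diagram as a $\otimes$-product. I would also verify that the two resulting pieces are bona fide (possibly non-causal) quantum processes of the permitted cq-form, so that they live in the relevant process theory, and I would make explicit that it is exactly the relaxation from causal to non-causal local operations --- licensing post-selection --- that allows the sum over communicated values to collapse to a single product term. The ``if'' direction and the purely algebraic manipulation of the wires are routine; this bookkeeping of the bipartite structure is the real substance of the argument.
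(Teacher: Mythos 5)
Your ``if'' direction and the overall shape of the ``only if'' direction (flip/cut the classical communication wires, exploit non-causality, absorb everything into two local boxes) match the paper, but there is a genuine gap at the single step that carries all the weight. You resolve each crossing classical wire as a sum $\sum_i \ketbra{i}{i}$ of local effect--state pairs and then assert that you ``may retain a single term of this sum.'' As stated this is false: the hypothesis of the theorem is only that the \emph{whole} composite equals $\widehat\psi'$, and a sum of $\otimes$-separated diagrams is in general not proportional to any one of its terms (the cup itself, $\sum_i \ketbra{}{}$-style, is the standard counterexample: every term $\point{i}\,\point{i}$ is a product state, yet the sum is maximally entangled). Your appeal to ``the outcome produced on the successful branch'' presupposes that some branch outputs $\widehat\psi'$, which is exactly what needs to be proved in this formalisation of SLOCC, where stochasticity is already absorbed into the non-causal local maps and no distinguished branch exists in the diagram.

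The missing ingredient is the purity of the target state $\widehat\psi'$. The paper first uses non-causality to bend all classical communication into a single classical wire joining the two parties, then observes that deleting a retained copy of that classical data yields the \emph{pure} state $\widehat\psi'$; Proposition~\ref{prop:discard-mix-pure-proc-class} then forces the process with the classical output kept to $\otimes$-separate as a probability distribution $p$ tensored with $\widehat\psi'$, as in \eqref{slocc-branch3}. Only because of this separation is every branch with $\pointbraket{p}{i}\neq 0$ automatically proportional to $\widehat\psi'$, so that post-selecting one such outcome \eqref{bekan777} legitimately collapses the protocol to the product form \eqref{eq:slocc-sep}. Your bookkeeping concern (that severing the communication wires disconnects the two parties) is handled by the definition of a LOCC/SLOCC process and is not the real issue; the purity argument is. If you insert that step, your proof becomes essentially the paper's.
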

\begin{proof}
Note that, since we are not requiring quantum processes to be causal, we are allowed to use classical caps, so the direction of classical communication is irrelevant:
\ctikzfig{slocc-flip}
Thus, if $\widehat\psi$ is SLOCC-convertible into $\widehat\psi'$, we can assume without loss of generality that there exist cq-maps $\Phi_1'$ and $\Phi_2'$ such that:
\[
\dbistate{\widehat\psi'\!}  \ \ =  \,  \ \input{./figures/slocc-branch2-copy1.tikz}
\]
for some sufficiently large classical system. Then, since:
\[
\dbistate{\widehat\psi'\!}  \ \ =  \,  \ \input{./figures/slocc-branch2-copy2.tikz}
\]
by Proposition \ref{prop:discard-mix-pure-proc-class} we have:  
\beq\label{slocc-branch3}
  \input{./figures/slocc-branch3.tikz}  
\eeq  
for some $p$.  Since $p$ is a (causal) probability distribution, there must be some ONB effect $i$ such that:
\beq\label{bekan777}
  \pointbraket{p}{i}\ \scalareq\ \emptydiag 
\eeq
Hence:
\[
\dbistate{\widehat\psi'\!}  \ \ \namedscalareq{(\ref{bekan777})}  \,  \   \input{./figures/slocc-branch4.tikz}
  \]
Letting:
\[
\dmap{\Phi_1}\ \, \scalareq\ \ \input{./figures/slocc-L.tikz}
\qquad\qquad
\dmap{\Phi_2}\  \, \scalareq\ \ \input{./figures/slocc-R.tikz}
\]
(up to an appropriately chosen number) yields equation \eqref{eq:slocc-sep}.  
\end{proof}

The passage from LOCC to SLOCC drastically cuts down the number of equivalence classes of states. For example, with pure bipartite qubit states, there are only two:
\ctikzfig{bekan69}
where a line downward indicates that one class of states is SLOCC-convertible into another. For example, we can obtain the separable state pictured above from the cup via:
\ctikzfig{slocc-cup}
In particular, there exists a single `best', i.e.~\textit{SLOCC-maximal}, state from which we can obtain any other bipartite qubit state via SLOCC operations. 

However, for 3 qubits, the situation becomes much more interesting. Now, there is no longer a single SLOCC-maximal state, but two:
\ctikzfig{slocc3}  
These represent two \underline{qualitatively different} kinds of tripartite entanglement, which can be witnessed by distinct behaviours of the different kinds of `arachnids' from which they arise. The first kind of SLOCC-maximal state is the GHZ-state, which as we saw in Section~\ref{sec:quantum-and-bastard}, arises from a (doubled) 3-legged spider. The second kind of state, the \textit{W-state} arises from a different kind of arachnid, called an \textit{anti-spider}:

\begin{definition}
  Anti-spiders are families of processes:
  \[ 
  \left\{\  \underbrace{\overbrace{\ \input{./figures/spiderW.tikz}\ }^n}_m \ \right\}_{m,n} 
  \]
  satisfying three spider-like equations:
  \[ 
  \input{./figures/spider-flipW.tikz}\ \ =\ \ \input{./figures/spiderW.tikz}\ \ =\ \ \input{./figures/spider-flipW2.tikz} 
  \]
  \ctikzfig{bastards-compose-bisW1}
  as well as one (very un-spider-like) equation:
  \begin{equation}\label{eq:explode2}
  \input{./figures/bekan84.tikz}\ \ \scalareq\ \ \input{./figures/bastards-compose-trisW.tikz}    
  \end{equation}
\end{definition}

In fact, these two very different kinds of processes, spiders and anti-spiders, precisely capture the distinction of the GHZ-state vs.~the W-state:

\begin{theorem}\label{thm:ghz-w-slocc}
  For any family \whitedot of spiders and \wdot of anti-spiders on qubits, the following states:
  \[
 \input{./figures/qspiderb.tikz} \qquad\qquad\qquad\qquad \input{./figures/dW_copy.tikz} 
  \]
  are always SLOCC-equivalent to the GHZ- and W-state, respectively.
\end{theorem}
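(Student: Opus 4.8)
The plan is to reduce SLOCC-equivalence of these pure tripartite states to the existence of \emph{invertible} local operations, and then to exhibit such operations explicitly in each of the two cases. Theorem~\ref{thm:slocc-no-cc} generalises immediately from two to three parties: a pure state $\widehat\psi$ is SLOCC-convertible into $\widehat\psi'$ precisely when there are (not necessarily causal) quantum processes $\Phi_1,\Phi_2,\Phi_3$, one per party, with $(\Phi_1\otimes\Phi_2\otimes\Phi_3)$ taking $\widehat\psi$ to $\widehat\psi'$ up to a number, the proof being the verbatim tripartite version of the one given. For \emph{pure} states the local processes are themselves pure, hence doublings $\doubleop(L_i)$ of linear maps $L_i$. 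Thus it suffices, in each case, to exhibit invertible linear maps $L_1,L_2,L_3$ on $\mathbb{C}^2$ whose doublings carry the standard GHZ- (resp.\ W-) state to the state built from the given arachnid: the inverses $L_i^{-1}$ then witness convertibility in the other direction, yielding SLOCC-equivalence.

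For the GHZ direction this is essentially immediate. By Example~\ref{ex:hilb-onb} a family \whitedot of spiders on a qubit is the same data as an ONB $\{e_0,e_1\}$ of $\mathbb{C}^2$, and the depicted state $\tikzfig{qspiderb}$ is the doubling of the classical three-legged spider $e_0\otimes e_0\otimes e_0 + e_1\otimes e_1\otimes e_1$. Writing $U$ for the unitary sending the computational basis to $\{e_0,e_1\}$, this classical state is $U^{\otimes 3}$ applied leg-wise to the standard GHZ vector; doubling and using functoriality, the quantum state is $(\doubleop U)^{\otimes 3}$ applied to the doubled GHZ-state. Since $U$ is unitary and therefore invertible, I obtain an SLOCC-equivalence (indeed a local-unitary one).

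For the W direction I first extract the low-arity anti-spiders, namely the comultiplication (the $1\!\to\!2$ piece), the unit (the $0\!\to\!1$ piece) and their adjoints, and read off what the three axioms say about them. Leg-swap invariance makes the comultiplication cocommutative, the fusion rule depicted in the anti-spider definition makes $(\mathbb{C}^2,\text{comult},\text{counit})$ a coassociative counital comonoid, and the ``explode'' equation~\eqref{eq:explode2} is the genuinely non-spider-like input that normalises the structure. Working in $\textrm{Mat}(\mathbb C)$ (legitimate by completeness, Theorem~\ref{thm:spider-completeness}), I will use these relations to put the comonoid into a normal form and produce an invertible $L$ on $\mathbb{C}^2$ intertwining it with the standard W-comonoid $\{|0\rangle\mapsto|00\rangle,\ |1\rangle\mapsto|01\rangle+|10\rangle\}$ with unit $|1\rangle$. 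Applying $\doubleop(L)$ leg-wise then carries the doubled standard W-state to $\tikzfig{dW_copy}$, and invertibility of $L$ gives the SLOCC-equivalence.

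The main obstacle is the W direction, and specifically the extraction of the normal form from the anti-spider axioms together with the verification that the resulting state lands in the W-class rather than the GHZ-class (or a biseparable/degenerate class). The real work is in showing that~\eqref{eq:explode2} excludes the degenerate solutions of the (co)associativity and symmetry constraints, pinning the comultiplication down to the single-excitation W-structure up to base change; once the normal form is in hand, exhibiting the intertwiner $L$ and checking its invertibility is routine. The GHZ direction, by contrast, carries no real difficulty beyond the spider-ONB dictionary.
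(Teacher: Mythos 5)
First, a caveat: the paper itself does not prove Theorem~\ref{thm:ghz-w-slocc}; it explicitly defers the proof to the literature (``We refer the interested reader to further reading below for proofs of this fact and Theorem~\ref{thm:ghz-w-slocc}'', pointing at \cite{CK}). So there is no in-paper argument to compare yours against, and your proposal has to be judged on its own. Your overall strategy is the right one and matches what \cite{CK} actually does: reduce SLOCC-equivalence of pure tripartite states to the existence of invertible local linear maps, dispatch the GHZ case via the spider--ONB dictionary of Example~\ref{ex:hilb-onb} (a local unitary suffices), and handle the W case by classifying the algebraic structure carried by the anti-spider family. The GHZ half of your argument is essentially complete.

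The gap is in the W half, and it sits exactly where you say it does --- which means it is not closed. Everything from ``I will use these relations to put the comonoid into a normal form'' onward is a statement of intent, not a proof: the entire mathematical content of the W case of the theorem is the claim that a cocommutative, coassociative, counital comonoid on $\mathbb{C}^2$ satisfying the anti-special equation \eqref{eq:explode2} (together with the Frobenius interaction coming from the fusion rule) is conjugate, by an invertible map, to the comonoid $\ket{0}\mapsto\ket{00}$, $\ket{1}\mapsto\ket{01}+\ket{10}$ with monoid unit $\ket{1}$. You correctly identify that the work is in showing \eqref{eq:explode2} rules out the degenerate and GHZ-like (special) solutions of the (co)associativity constraints, but you do not carry out that elimination, nor do you exhibit the intertwiner $L$ or verify that the resulting tripartite state has the local ranks of the W-class rather than landing in a biseparable class. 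A complete argument needs, concretely: a parametrisation of commutative Frobenius algebras on $\mathbb{C}^2$ (e.g.\ via the copiable points of the comultiplication), a computation showing that anti-specialness forces exactly one copiable point and fixes the remaining structure constants up to invertible base change, and then the explicit check that the resulting $0\to 3$ process is $\ket{001}+\ket{010}+\ket{100}$ in the new basis. Until that is done, the W case remains an accurate description of what must be proved rather than a proof of it.
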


Conversely, since the GHZ-state and the W-state are the \textit{only} possible non-separable tripartite states, up to SLOCC-equivalence, it is straightforward to show that any family \blackdot of processes satisfying:
\[
\input{./figures/spider-flip-gen.tikz}\ \ =\ \ \input{./figures/spider-gen.tikz}\ \ =\ \ \input{./figures/spider-flip-gen2.tikz}
\]
\[
\input{./figures/bastards-compose-bis-gen.tikz}
\]
must either be isomorphic to some fixed spider \whitedot or anti-spider \wdot. That is, there must exist some isomorphism $f$ such that either:
\[ 
\input{./figures/spider-gen.tikz}\ \ =\ \ \input{./figures/arachnid-ghz.tikz} 
\]
or
\[ 
\input{./figures/spider-gen.tikz}\ \ =\ \ \input{./figures/arachnid-w.tikz} 
\]
We refer the interested reader to further reading below for proofs of this fact and Theorem~\ref{thm:ghz-w-slocc}.

\subsection{Category-theoretic counterpart}\label{sec:comodule} 

One can show that $\widehat P$ defines a von Neumann measurement if and only if the (non-doubled) process $P$ can be chosen in such a way that it forms a \textit{dagger-comodule} with respect to \whitedot. That is, $P$ satisfies:
\[
\input{./figures/comodule-assoc-cqm.tikz}
\qquad\qquad\qquad
\input{./figures/comodule-unit-cqm.tikz}
\]
\[
\input{./figures/comodule-dagger.tikz}
\]
if and only if the associated quantum process satisfies:
\[
  \input{./figures/diagramsUUorth3-cqm.tikz}
  \qquad\quad\qquad\quad
  \input{./figures/measdiag1-caus-cqm.tikz}
\]
Going from dagger-comodules to von Neumann measurements is straightforward, whereas coming back is a bit more subtle, and the requires the correct choice of $P$ in:  
\ctikzfig{measdiag1}
which is only uniquely fixed up to a phase gate (introduced in Section \ref{sec:phase-group} below).   

Just as spiders correspond to special commutative Frobenius alegbras, anti-spiders correspond to \textit{anti-special commutative Frobenius algebras}. The crucial difference comes from what happens when two legs meet, or in algebraic language, when a multiplication meets a comultiplication:
\[ 
\input{./figures/black-copymatch.tikz}\ \ =\ \ \textrm{???} 
\]
For special commutative Frobenius algebras, this yields an identity, whereas for anti-special Frobenius algebras, this separates:
\[ 
\input{./figures/copymatch.tikz} \quad\qquad\qquad\textrm{vs.}\qquad\qquad\quad \input{./figures/as-copymatch.tikz}   
\]

\subsection{Reference and further reading}

von Neumann measurements, not surprisingly, are due to von Neumann \cite{vN}, in particular including the idea of the collapse.  A diagrammatic treatment first appeared in  \cite{CPav, CPaqPav}, including the comodule perspective of Section \ref{sec:comodule}.  The treatment of von Neumann measurements as quantum processes appears in \cite{CKbook}, and until recently, the equivalence of the two perspectives was an open problem. This has now been settled, and is in the process of being written up \cite{CGKvN}. Naimark's theorem for POVMs first appeared in \cite{Naimark}.  

The quantum teleportation protocol first appeared in \cite{tele}, dense coding was first proposed in \cite{BW}, and entanglement swapping in \cite{Swap}.  Their  diagrammatic treatment using `classical wires' was initiated in \cite{CPav, CPaqPav}.

Nielsen \cite{Nielsen} gave a characterisation for LOCC-convertibility of bipartite states in terms of the majorization pre-ordering briefly mentioned in Section~\ref{sec:classify}. The SLOCC-classification of three qubits is taken from \cite{DurVC}.  From four qubits onward there is an uncountably infinite set of SLOCC classes.  One can nevertheless still identify finitely many parametrised `super-classes' (see e.g.~\cite{Verstraete4qubit, lamata2007inductive}). 

%Coecke and Kissinger 
We defined anti-spiders (a.k.a.~anti-special commutative Frobenius algebras) in \cite{CK}, and proved the   correspondence of special and anti-special commutative Frobenius algebras to GHZ- and W-states in Theorem~\ref{thm:ghz-w-slocc}.

\section{Quantumness}\label{sec:quantumness}  

\begin{quote}
\em When you think outside the bun.  \\
It's scary out there.\\
There's also lots of stinky poo.  \\
But sometimes when you go outside of your comfort zone. \\     
You might find something brand new.\\
To help the humans leap.\\
Forward, do a simple leap.\\
Never be they same, they leap.\\
Across the ****ing annals of time.\\
Quantum leap, leap.\par \em \hfill    --- Tenacious D, lyrics of ``Quantum leap", 2012.                  
\end{quote}

\noindent  
We extensively discussed what classicality is, and in particular, that it is witnessed by spiders.  But what is non-classicality, i.e.~\em genuine quantumness\em?  This section is about the stuff that has no classical counterpart.  There will be no need to introduce any new diagrammatic ingredient.  All we need to define quantumness genuine quantumness are the same spiders that we use to witness classicality.

\subsection{Unbiasedness and phase states}

What does it mean for something to be genuinely quantum? Our answer is:  when in cannot survive the quantum-classical passage.  Since  quantum-classical passage means measurement, and since we defined measurement in terms of spiders, we will again rely on spiders. We will denote a family of spiders  by \whitedot.    

\begin{definition}
A \em phase state \em for \whitedot is a pure  state $\widehat\psi$ that satisfies:  
\beq\label{eq:unbiased6}
\input{./figures/unbiasedstate6.tikz}
\eeq
\end{definition}

In the LHS of equation (\ref{eq:unbiased6}) we see the state $\widehat\psi$ being converted into a classical system by means of a measurement, and in the RHS we see that nothing remains of it.  We provide a new notation for phase states, namely:  
\[
\dphasepoint{\alpha} 
\]
and its transpose, conjugate and adjoint are denoted as follows as:
\[
\dphasecopoint{\alpha}\qquad\qquad\qquad\quad\dphasepoint{\!\textrm{-}\alpha\!}\qquad\qquad\quad\qquad\dphasecopoint{\!\textrm{-}\alpha\!}
\]
Unlike all the states we worked with thus far, phase states are not causal:
\ctikzfig{CQMII-21}
However, fixing this is simply a manner of adjoining a number:
\[
\oneoverD\ \dphasepoint{\alpha} 
\]
The reason for making this convention will become clear in the following section.

Phase states have an interesting property, which becomes exposed by plugging deterministic classical states in the measurement-outcome wire:
\ctikzfig{CQMII-22}
The number in the RHS does not depend on $\alpha$ at all, and is the same for all $i$.  This means that phase states have no `bias' towards any of the measurement outcomes.  therefore they are also called \em unbiassed\em. 

\subsection{Phase spiders}\label{sec:phase-spiders}

We can  use phase states to decorate quantum spiders, yielding \em phase spiders\em:
  \[
  \input{./figures/CQMII-27.tikz}\ \ :=\ \ \raisebox{0.7mm}{\input{./figures/CQMII-26.tikz}} 
  \]
and we call the decorations \em phases\em.  Note that it doesn't matter if we plug a phase state into an input or the transpose of a phase state into an output:
\ctikzfig{CQMII-25}
or if we plug it into a different leg. Just like in the case of phase states, if a phase spider in brought into contact with classicality, its phase vanishes:

\begin{proposition}\label{thm:alphaspider_intro}    
If any leg of a phase spider is measured,
its  phase vanishes:
\[ %\beq\label{eq:alphaspider_intro}
\input{./figures/CQMII-24.tikz}  
\] %\eeq
\end{proposition}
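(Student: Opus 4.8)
The plan is to reduce the whole left-hand side to a single spider and then invoke the defining property of phase states. First I would unfold the two pieces of notation involved. By definition a phase spider is a \emph{quantum} spider --- i.e.\ a doubled classical spider --- carrying the doubled phase state $\widehat\psi$ on one of its legs, while \emph{measuring} a leg is, by its own definition, the composition with a single-headed (bastard) spider. Thus the left-hand side is nothing but a connected diagram built entirely out of spiders, together with the one phase-state decoration $\widehat\psi$.

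Next I would fuse. Since the diagram is connected and now contains a single-headed leg coming from the measurement, bastard spider-fusion (Proposition~\ref{thm:bastards-compose}) collapses it into one single-headed spider: a bastard spider with a single classical output where the measurement took place and quantum legs elsewhere, still decorated by $\widehat\psi$. The essential consequence of having a classical head is that it pins all of the surrounding quantum legs to the `diagonal': the single classical index is copied onto the ket- and the bra-half of each doubled quantum leg, in particular onto the leg bearing $\widehat\psi$.

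It then remains to kill the phase. On that leg the doubled phase state now sits directly against the single-headed spider, and this local configuration is exactly the left-hand side of the unbiasedness equation~(\ref{eq:unbiased6}): feeding $\widehat\psi$ through the measurement leaves nothing of the phase behind, because the two halves of the doubling contribute $\alpha$ and $-\alpha$ which cancel on the diagonal. Applying~(\ref{eq:unbiased6}) therefore erases the decoration, and a final ordinary spider-fusion~(\ref{eq:bastards-compose-bis}) tidies the result back into the single, phase-free bastard spider displayed on the right-hand side (up to the scalar fixed by the phase-state normalisation convention).

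The step I expect to be the main obstacle is the doubling/conjugation bookkeeping that makes~(\ref{eq:unbiased6}) applicable \emph{verbatim}: one must check that introducing the single classical head genuinely forces the phase decoration to the diagonal, so that the $\alpha$-half of $\widehat\psi$ is brought into contact with its $(-\alpha)$-conjugate rather than surviving as a non-trivial phase on a still-quantum leg. Keeping careful track of which index is copied where --- and hence that the phase meets precisely its own conjugate --- is where the argument has to be carried out with care, the remaining manipulations being routine applications of spider-fusion.
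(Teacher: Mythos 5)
Your proposal is correct and follows essentially the same route as the paper: bastard spider-fusion brings the measurement into direct contact with the phase state, and the defining unbiasedness equation~(\ref{eq:unbiased6}) then erases the phase. The ``main obstacle'' you flag is not really one --- since~(\ref{eq:unbiased6}) is the \emph{definition} of a phase state, no $\alpha$/$-\alpha$ cancellation needs to be verified; it is applied verbatim once fusion has produced the right local configuration.
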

\begin{proof}
Using bastard spider-fusion we have:  
\ctikzfig{CQMII-23}
\end{proof}

...and the same goes for decoherence:
\[ %\beq\label{eq:alphaspider_decoh}  
\input{./figures/CQMII-28.tikz}
\] %\eeq
...or any other bastard spider:
    \ctikzfig{CQMII-29}
What about when two phase spiders fuse together? Using quantum spider fusion we have:  
\ctikzfig{CQMII-30}
The RHS is again a phase spider:

\begin{lemma}\label{lem:mult-phase1}
The state:
\begin{equation}\label{eq:phase-sum-pre}  
  \input{./figures/CQMII-31.tikz}
\end{equation}
 is  a phase state, and hence:  
 \ctikzfig{CQMII-32}
 is  a phase spider.
\end{lemma}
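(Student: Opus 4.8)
The plan is to verify the statement by checking the defining unbiasedness equation \eqref{eq:unbiased6} directly for the state in \eqref{eq:phase-sum-pre}: I will measure that state and show that nothing survives the measurement, which is exactly what it means to be a phase state for \smallwhitedot. The whole argument stays among spiders, since the only ingredients are the two genuine phase states $\dphasepoint{\alpha}$ and $\dphasepoint{\beta}$ plugged into a doubled quantum spider, and the only facts about $\alpha$ and $\beta$ that I will use are their two unbiasedness hypotheses.

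First I would post-compose the product state with a measurement \meas and unfold all the doubled wires. Under unfolding the doubled quantum spider becomes an ordinary spider on the `ket' wires together with its conjugate on the `bra' wires, and the measurement becomes a single spider multiplication bridging the two output wires; each phase state contributes $\alpha$ (resp.\ $\beta$) on a ket leg and its conjugate $-\alpha$ (resp.\ $-\beta$) on the corresponding bra leg. Next I collapse the whole connected web of spiders into one spider using spider-fusion \eqref{eq:bastards-compose-bis}. Since that spider is commutative and associative, I am free to re-bracket its legs so that the two legs carrying $\alpha$ and $-\alpha$ are multiplied together, and separately the two legs carrying $\beta$ and $-\beta$. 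But each such pair is precisely the measurement of a single phase state, so the unbiasedness hypothesis \eqref{eq:unbiased6} applied to $\dphasepoint{\alpha}$ and to $\dphasepoint{\beta}$ turns each pair into the spider's unit, and multiplying two units yields a single unit, which is exactly the right-hand side of \eqref{eq:unbiased6}. Hence the product state is a phase state. The `hence' clause is then immediate: decorating the quantum spider with this phase state is, via the quantum-spider-fusion identity that produced \eqref{eq:phase-sum-pre}, nothing but the fused pair of phase spiders, so it is again of phase-spider form.

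The main difficulty is purely bookkeeping: after unfolding I must track which single wires are ket-type and which are bra-type, and ensure the re-bracketing pairs each phase leg with the conjugate leg of the \emph{same} phase rather than crossing the two phases. A slicker route that sidesteps the unfolding is to argue directly with bastard spiders: the measurement is a single-head spider, so by Proposition~\ref{thm:bastards-compose} it fuses with the quantum multiplication into one bastard spider carrying the two phase states at its quantum legs, after which the same re-bracketing together with the two applications of \eqref{eq:unbiased6} finishes the proof. I would also keep an eye on the scalar factors generated by the spider fusions, but these are absorbed by the normalisation conventions already attached to phase states and so do not affect the conclusion.
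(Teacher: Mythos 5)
Your proposal is correct and, in its ``slicker route,'' is exactly the paper's proof: the paper simply applies bastard spider fusion to rewrite the measured state as a classical multiplication of the two individual measurements $\meas\circ\dphasepoint{\alpha}$ and $\meas\circ\dphasepoint{\beta}$, then invokes unbiasedness \eqref{eq:unbiased6} for each and fuses the resulting units into one. Your more laborious first route (unfolding the doubled wires and tracking ket/bra legs) is just the same computation carried out one level of abstraction lower, so there is no substantive difference.
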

\begin{proof}
Using bastard spider fusion we have:
\ctikzfig{CQMII-33}
So equation (\ref{eq:unbiased6}) is indeed satisfied for the state~\eqref{eq:phase-sum-pre}.      
\end{proof}

Hence, phase spiders fuse as follows:   
\[ %\beq\label{eq:spider1}
\input{./figures/CQMII-34.tikz}
\] %\eeq
where we used the shorthand:
\[
   \begin{tikzpicture}
	\begin{pgfonlayer}{nodelayer}
		\node [style=white phase ddot] (0) at (0, -0.5) {$\ \alpha\! + \!\beta\ \,$};
		\node [style=none] (1) at (0, 0.75) {};
	\end{pgfonlayer}
	\begin{pgfonlayer}{edgelayer}
		\draw [style=boldedge] (0) to (1.center); 
	\end{pgfonlayer}
\end{tikzpicture}\ \ := \ \ \input{./figures/CQMII-36.tikz}  
\]
%Clearly the order of $\alpha$ and $\beta$ is irrelevant:
%\ctikzfig{CQMII-37}
Extending this notation  to $n$ phases:
\[
   \begin{tikzpicture}
	\begin{pgfonlayer}{nodelayer}
		\node [style=white phase ddot] (0) at (0, -0.5) {$\ \sum\alpha_i\ $};
		\node [style=none] (1) at (0, 0.75) {};
	\end{pgfonlayer}
	\begin{pgfonlayer}{edgelayer}
		\draw [style=boldedge] (0) to (1.center); 
	\end{pgfonlayer}
\end{tikzpicture}\ \ := \ \ \input{./figures/CQMII-40.tikz}
\]
 any diagram consisting only of phase spiders and which is moreover connected is itself a phase spider,  whose phase is the sum of the phases of each of the component spiders:
\[ %\beq\label{eq:phasespider5}
\input{./figures/CQMII-38.tikz}
\] %\eeq 

So genuine quantumness incarnates as decorations for quantum spiders.  Everything we already knew about classical spiders  carries over to these genuinely quantum spiders, with the added bonus that decorations are added when fusing.  In fact, there is something more to these decorations: they form a structure which mathematicians completely understand and physicists love.

\subsection{The phase group}\label{sec:phase-group}

Indeed, in this section, we will see that the set of phase states forms a \textit{commutative group}.  Taking phase states to be group elements:
\[ 
\dphasepoint{\alpha}\ \ \leftrightarrow \ \  a 
\]
we have already identified a candidate for the group-sum of two phase states:
\[
  \input{./figures/CQMII-36.tikz} \ \ \leftrightarrow \ \  a + b
\]
so the group-sum itself is:
\[
\input{./figures/sumdiag.tikz} \ \ \leftrightarrow \ \  +
\]
Hence,  it only remains to find the unit and the inverse. 

\begin{theorem}\label{thm:phase-group}
  For any family of spiders $\whitedot$, the set of phase states:    
  \[ 
\ \,  \left\{ \dphasepoint{\alpha} \right\}_{\alpha}
  \]
  form a commutative group where:  
  \bit
    \item the  group-sum is: 
 \[ %   \beq\label{eq:phasGthm1}
     \ \ := \ \ \input{./figures/CQMII-36.tikz}\ \ 
\] %    \eeq
   \item the  unit is:
   \beq\label{eq:phasGthm2}
    \begin{tikzpicture}
	\begin{pgfonlayer}{nodelayer}
		\node [style=none] (0) at (0, 0.5) {};
		\node [style=white ddot] (1) at (0, -0.5) {};
	\end{pgfonlayer}
	\begin{pgfonlayer}{edgelayer}
		\draw [style=boldedge] (1) to (0.center); 
	\end{pgfonlayer}
\end{tikzpicture}
   \eeq
   \item the  inverse is:
   \[ %  \begin{equation}\label{eq:phase-conj-inverse}
       \dphasepoint{\!\textrm{-}\alpha\!}
     \] %\end{equation}
    \eit
\end{theorem}  
\begin{proof}
Using bastard spider fusion we have:         
 \ctikzfig{phase-mult-pf-bis}
so (\ref{eq:phasGthm2}) is a phase state, and by conjugating both sides of (\ref{eq:unbiased6}) we obtain:  
\ctikzfig{CQMII-43}
so the conjugate of a phase state is again a phase state.  The inverse law arises by first unfolding~\eqref{eq:unbiased6}:
\ctikzfig{CQMII-44} 
and then doubling:
\[ %\beq\label{eq:phasepointinverse}
  \input{./figures/CQMII-42.tikz}
\] %\eeq
It only remains to verify associativity, commutativity, and unit laws, which all follow from spider fusion, for example, in the case of associativity we have:        
  \ctikzfig{CQMII-41}
\end{proof}

We formulated the \em phase group \em in terms of phase states, but in fact, the decorations of any phase spider are subject to this group structure.  For example, one can reformulate Theorem \ref{thm:phase-group} in terms of \em phase gates\em:
\ctikzfig{CQMII-45}
These phase gates inherit sums from phase states:
\ctikzfig{CQMII-48}
as well as inverses:
\ctikzfig{CQMII-46}
%Phase gates are always unitary:  
%\ctikzfig{CQMII-47}
Hence:

\begin{corollary}
For any family of spiders $\whitedot$, the set of phase gates:    
  \[ 
\ \,  \left\{  \begin{tikzpicture}
	\begin{pgfonlayer}{nodelayer}
		\node [style=white phase ddot] (0) at (0, 0) {$\alpha$};
		\node [style=none] (1) at (0, -1) {};
		\node [style=none] (2) at (0, 1) {};
	\end{pgfonlayer}
	\begin{pgfonlayer}{edgelayer} 
		\draw [style=boldedge] (2.center) to (0);
		\draw [style=boldedge] (0) to (1.center);  
	\end{pgfonlayer}
\end{tikzpicture} \right\}_{\alpha}  
  \]
forms a commutative group where:
   \bit
    \item the group-sum is: 
  \[
  \input{./figures/CQMII-50.tikz}\ \ \ \ \,
  \]
   \item the  unit is:
  \[
  \begin{tikzpicture}
	\begin{pgfonlayer}{nodelayer}
		\node [style=none] (0) at (-1.75, 1) {};
		\node [style=none] (1) at (-1.75, -1) {};
	\end{pgfonlayer}
	\begin{pgfonlayer}{edgelayer}
		\draw [style=boldedge] (0.center) to (1.center);
	\end{pgfonlayer}
\end{tikzpicture}\ \ \ \ \,
 \] 
   \item the  inverse is:
  \[
  \begin{tikzpicture}
	\begin{pgfonlayer}{nodelayer}
		\node [style=white phase ddot] (0) at (0, 0) {$\textrm{-}\alpha$};
		\node [style=none] (1) at (0, -1) {};
		\node [style=none] (2) at (0, 1) {};
	\end{pgfonlayer}
	\begin{pgfonlayer}{edgelayer}
		\draw [style=boldedge] (2.center) to (0); 
		\draw [style=boldedge] (0) to (1.center); 
	\end{pgfonlayer}
\end{tikzpicture} \ \ \ \ \, 
    \]
      \eit
\end{corollary}

\subsection{A hint of non-locality}\label{sec:nonloc-hint}

When we apply a phase gate to each of three systems in a GHZ-state, by phase spider fusion we obtain:
\ctikzfig{qspiderbGHZ} 
While this is a seemingly innocent looking diagrammatic equation,  when we interpret it physically the implications are somewhat shocking! Assume that the three parties that perform the three phase gates:   
\ctikzfig{qspiderbGHZbisCQM}
are so far apart that when they perform their phase gates, light does not have the time to travel between them.   While the choices of the angles $\alpha$, $\beta$ and $\gamma$ are made independently, at very distant locations, the resulting state only depends on the group-sum of the three phases. So if, for instance, these phases would have been permuted, the resulting state would be the same. This hints at the fact that these processes are interacting instantaneously over a long distance. The diagram literally shows that it is \underline{as if} the three phases are travelling backward in time to meet up with each other:    
\ctikzfig{qspiderbGHZtris} 
Of course, all of this happens at the level of a quantum state, and naive measurement kills the phases, and hence all the magic:
 \ctikzfig{qspiderbGHZquadCQM}
However, in the third part of this series we will see that one can make clever measurement choices that actually expose this dance of the phases in their measurement outcomes, which gives rise to \em quantum non-locality\em.

\subsection{Reference and further reading}       

The notions of phase spiders and the phase group are taken from \cite{CD1,CD2}. In quantum information theory, unbiased states are typically studied in the context of a pair of ONBs where each element of the first ONB is unbiased with respect to the second ONB (or vice-versa, which is equivalent). These \textit{mutually unbiased bases} were first introduced by \cite{Schwinger}. An extensive survey of what is known about them is in \cite{MUBsurvey}, including problems concerning their classification.

\section{What comes next}    

By means of spiders (a.k.a.~generalised wires) we were able to capture classicality, as well as genuine quantumness.  In particular, in Section~\ref{sec:phase-spiders} we encountered phase spiders.  These will allow us to `fill in'  boxes of the kind:
\ctikzfig{classquantmap1}
In \cite{CQMIII} we will consider not just one family of phase spiders, but two, yielding much more interesting ways to `fill in boxes':  
\ctikzfig{ZX-diag-form-scopy}
These are either related by \textit{complementary} or by \textit{strong complementary}, resulting in a spider-language that is \textit{universal} for describing quantum and classical processes. For strongly complementary pairs of phase spiders we will moreover  obtain a strong new completeness theorem for a large family of diagrams. Strongly complementary spiders will also enable us to complete the proof of quantum non-locality we hinted at in Section~\ref{sec:nonloc-hint},  and express many other things, including quantum communication and quantum computation.

\bibliography{main}

\begin{thebibliography}{10}

\bibitem{Nobroadcast}
H.~Barnum, C.~M. Caves, C.~A. Fuchs, R.~Jozsa, and B.~Schumacher.
\newblock Noncommuting mixed states cannot be broadcast.
\newblock {\em Physical Review Letters}, 76:2818, 1996.

\bibitem{tele}
C.~H. Bennett, G.~Brassard, C.~Crepeau, R.~Jozsa, A.~Peres, and W.~K. Wootters.
\newblock {Teleporting an unknown quantum state via dual classical and
  Einstein-Podolsky-Rosen channels}.
\newblock {\em Physical Review Letters}, 70(13):1895--1899, 1993.

\bibitem{BW}
C.~H. Bennett and S.~Wiesner.
\newblock Communication via one- and two-particle operators on
  {E}instein-{P}odolsky-{R}osen states.
\newblock {\em Physical Review Letters}, 69:2881--2884, 1992.

\bibitem{BrauerNesbitt}
R.~Brauer and C.~Nesbitt.
\newblock On the regular representations of algebras.
\newblock {\em Proceedings of the National Academy of Sciences of the United
  States of America}, 23(4):236, 1937.

\bibitem{CarboniWalters}
A.~Carboni and R.~F.~C. Walters.
\newblock Cartesian bicategories {I}.
\newblock {\em Journal of Pure and Applied Algebra}, 49:11--32, 1987.

\bibitem{CD1}
B.~Coecke and R.~Duncan.
\newblock Interacting quantum observables.
\newblock In {\em Proceedings of the 37th International Colloquium on Automata,
  Languages and Programming (ICALP)}, Lecture Notes in Computer Science, 2008.

\bibitem{CD2}
B.~Coecke and R.~Duncan.
\newblock Interacting quantum observables: categorical algebra and
  diagrammatics.
\newblock {\em New Journal of Physics}, 13:043016, 2011.
\newblock {arXiv:quant-ph/09064725}.

\bibitem{CDKZ}
B.~Coecke, R.~Duncan, A.~Kissinger, and Q.~Wang.
\newblock Strong complementarity and non-locality in categorical quantum
  mechanics.
\newblock In {\em Proceedings of the 27th Annual IEEE Symposium on Logic in
  Computer Science (LICS)}, 2012.
\newblock arXiv:1203.4988.

\bibitem{CDKZ2}
B.~Coecke, R.~Duncan, A.~Kissinger, and Q.~Wang.
\newblock Generalised compositional theories and diagrammatic reasoning.
\newblock In G.~Chiribella and R.~W. Spekkens, editors, {\em Quantum Theory:
  Informational Foundations and Foils}, Fundamental Theories of Physics.
  Springer, 2016.
\newblock {arXiv:1203.4988}.

\bibitem{CGKvN}
B.~Coecke, S.~Gogioso, and A.~Kissinger.
\newblock Characterising von {N}eumann measurements in a process theory.
\newblock to appear.

\bibitem{CPstar}
B.~{Coecke}, C.~{Heunen}, and A.~{Kissinger}.
\newblock Categories of quantum and classical channels.
\newblock arXiv:1305.3821, 2013.

\bibitem{CK}
B.~Coecke and A.~Kissinger.
\newblock {The compositional structure of multipartite quantum entanglement}.
\newblock In {\em Automata, Languages and Programming}, Lecture Notes in
  Computer Science, pages 297--308. Springer, 2010.
\newblock {a}r{X}iv:1002.2540.

\bibitem{CKpaperI}
B.~Coecke and A.~Kissinger.
\newblock Categorical quantum mechanics {I}: causal quantum processes.
\newblock In E.~Landry, editor, {\em Categories for the Working Philosopher}.
  Oxford University Press, 2016.
\newblock ar{X}iv:1510.05468.

\bibitem{CQMIII}
B.~Coecke and A.~Kissinger.
\newblock Categorical quantum mechanics {III}: strong complementarity.
\newblock In preparation, 2016.

\bibitem{CKbook}
B.~Coecke and A.~Kissinger.
\newblock {\em Picturing Quantum Processes. A First Course in Quantum Theory
  and Diagrammatic Reasoning}.
\newblock Cambridge University Press, 2016.

\bibitem{CPaq}
B.~Coecke and {\'E}.~O. Paquette.
\newblock {POVM}s and {N}aimark's theorem without sums.
\newblock {\em Electronic Notes in Theoretical Computer Science}, 210:15--31,
  2008.
\newblock {arXiv:quant-ph/0608072}.

\bibitem{CPaqPav}
B.~Coecke, {\'E}.~O. Paquette, and D.~Pavlovi{\'c}.
\newblock {Classical and quantum structuralism}.
\newblock In S.~Gay and I.~Mackie, editors, {\em Semantic Techniques in Quantum
  Computation}, pages 29--69. Cambridge University Press, 2010.
\newblock {a}rXiv:0904.1997.

\bibitem{CPav}
B.~Coecke and D.~Pavlovic.
\newblock Quantum measurements without sums.
\newblock In G.~Chen, L.~Kauffman, and S.~Lamonaco, editors, {\em Mathematics
  of Quantum Computing and Technology}, pages 567--604. Taylor and Francis,
  2007.
\newblock {arXiv:quant-ph/0608035}.

\bibitem{CPV}
B.~Coecke, D.~Pavlovi{\'c}, and J.~Vicary.
\newblock A new description of orthogonal bases.
\newblock {\em Mathematical Structures in Computer Science, to appear},
  23:555--567, 2013.
\newblock {a}rXiv:quant-ph/0810.1037.

\bibitem{DaviesLewis}
E.~B. Davies and J.~T. Lewis.
\newblock An operational approach to quantum probability.
\newblock {\em Communications in Mathematical Physics}, 17:239--260, 1970.

\bibitem{Dieks}
D.~G. B.~J. Dieks.
\newblock Communication by {EPR} devices.
\newblock {\em Physics Letters A}, 92(6):271--272, 1982.

\bibitem{DurVC}
W.~D{\"{u}}r, G.~Vidal, and J.~I. Cirac.
\newblock Three qubits can be entangled in two inequivalent ways.
\newblock {\em Physical Review A}, 62(062314), 2000.

\bibitem{MUBsurvey}
T.~Durt, B.-G. Englert, I.~Bengtsson, and K.~{\.Z}yczkowski.
\newblock On mutually unbiased bases.
\newblock {\em International Journal of Quantum Information}, 8:535--640, 2010.

\bibitem{KissingerCompleteness}
A.~Kissinger.
\newblock Finite matrices are complete for (dagger-)hypergraph categories.
\newblock arXiv:1406.5942 [math.CT], 2014.

\bibitem{KockBook}
J.~Kock.
\newblock {\em Frobenius algebras and 2D topological quantum field theories},
  volume~59.
\newblock Cambridge University Press, 2004.

\bibitem{Lack}
S.~Lack.
\newblock Composing {PROP}s.
\newblock {\em Theory and Applications of Categories}, 13:147--163, 2004.

\bibitem{lamata2007inductive}
L.~Lamata, J.~Le{\'o}n, D.~Salgado, and E.~Solano.
\newblock Inductive entanglement classification of four qubits under stochastic
  local operations and classical communication.
\newblock {\em Physical Review A}, 75:022318, 2007.

\bibitem{Naimark}
M.~A. Neumark.
\newblock On spectral functions of a symmetric operator.
\newblock {\em Izvestiya Rossiiskoi Akademii Nauk. Seriya Matematicheskaya},
  7(6):285--296, 1943.

\bibitem{Nielsen}
M.~A. Nielsen.
\newblock Conditions for a class of entanglement transformations.
\newblock {\em Physical Review Letters}, 83(2):436--439, 1999.

\bibitem{Schwinger}
J.~Schwinger.
\newblock Unitary operator bases.
\newblock {\em Proceedings of the National Academy of Sciences of the
  {U.S.A.}}, 46:570--579, 1960.

\bibitem{Verstraete4qubit}
F.~Verstraete, J.~Dehaene, B.~De Moor, and H.~Verschelde.
\newblock Four qubits can be entangled in nine different ways.
\newblock {\em Physical Review A}, 65(052112), 2002.
\newblock ar{X}iv:quant-ph/0109033.

\bibitem{vN}
J.~von Neumann.
\newblock {\em Mathematische grundlagen der quantenmechanik}.
\newblock Springer-Verlag, 1932.
\newblock Translation, {\it Mathematical foundations of quantum mechanics},
  Princeton University Press, 1955.

\bibitem{WZ}
W.~Wootters and W.~Zurek.
\newblock A single quantum cannot be cloned.
\newblock {\em Nature}, 299:802--803, 1982.

\bibitem{Swap}
M.~Zukowski, A.~Zeilinger, M.~A. Horne, and A.~K. Ekert.
\newblock {E}vent-ready-detectors {B}ell experiment via entanglement swapping.
\newblock {\em Physical Review Letters}, 71:4287--4290, 1993.

\end{thebibliography}
\bibliographystyle{plain}  

\end{document}